\documentclass[11pt,a4paper]{article}

\usepackage[utf8]{inputenc}
\usepackage[T1]{fontenc}
\usepackage{amsmath,amssymb,amsthm}
\usepackage{booktabs}
\usepackage{graphicx}
\usepackage[margin=2.4cm]{geometry}
\usepackage{array}
\usepackage{float}
\usepackage{caption}
\usepackage[colorlinks=true,linkcolor=blue,citecolor=blue,urlcolor=blue]{hyperref}
\usepackage{verbatim}
\usepackage[ruled,linesnumbered]{algorithm2e}

\usepackage{iftex}
\ifxetex
  \usepackage{xeCJK}
\fi

\newtheorem{theorem}{Theorem}
\newtheorem{proposition}{Proposition}
\newtheorem{definition}{Definition}
\newtheorem{hypothesis}{Hypothesis}

\DeclareMathOperator*{\argmax}{arg\,max}

\title{Reversibility-Verified De-identification for Cloud-Local LLM Inference:\\
A Locally Certified Dehydrate-Rehydrate Loop with Layered Assurance (DR-SL)}
\author{Wen Hu$^{1,2}$\quad Ya Yu$^{1}$\quad Xutong Wang$^{1}$\\[2pt]
\small $^{1}$ College of Electronic and Information Engineering,\\
\small Nanjing University of Aeronautics and Astronautics, Nanjing, China\\
\small $^{2}$ Jiangsu Yunhefeng Intelligent Technology Co., Ltd., Jiangsu, China\\
\texttt{\small huwen@nuaa.edu.cn}\quad
\texttt{\small 15949116139@163.com}\quad
\texttt{\small wsgxt@outlook.com}}
\date{}

\begin{document}
\maketitle

\begin{abstract}
Cloud-local LLM inference must keep sensitive user data on-device while exploiting cloud-grade reasoning, yet existing sanitization approaches---placeholder substitution, differential-privacy perturbation, skill distillation---lack a release decision that is simultaneously safe and utility-preserving. We propose \textbf{DR-SL} (Dehydrate-Rehydrate with Self-Learning loop), which formalizes de-identification completeness as two measurable conditions---\textbf{de-identification sufficiency} under Pufferfish semantics and \textbf{task-information preservation} via QA probes---and decides releases with a fully local two-branch verifier under a lexicographic gate with guaranteed termination, backed by a deterministic hard line, an external strong-attacker re-test, and human fallback. Theoretically we prove Fano-type lower bounds, a Pufferfish witness, and a rate-privacy feasibility criterion, and we state their scope plainly: the bounds certify leakage, never safety, and are near-vacuous at our operating point---release safety rests on empirical calibration, the hard line, and human review. On a worst-case fully task-coupled benchmark, the loop reduces leakage $0.457\to0.304$ ($p\approx0$) and the release chain delivers \textbf{0.000} literal leakage at egress (160 instances, two strong attackers), with the system honestly degrading to a certification-and-routing mode exactly as the feasibility criterion predicts. On a mixed-coupling benchmark the same safe point releases \textbf{67.5\%} of instances automatically at zero measured leakage, Pareto-dominating placeholder and selective-LDP corners under an identical release rule. Two human studies anchor the semantic utility metric (Spearman $\rho=0.839$) and the annotation gold (type-level recall ${\ge}0.987$). The exploratory self-learning hypothesis was not supported and is reported as such. All theory bounds pass executable numerical verification; code, synthetic datasets, protocol, and human-study packages are public.

\medskip
\noindent\textbf{Keywords:} sensitive-information isolation; cloud-local inference; dehydrate-rehydrate; reversibility verification; self-learning; Pufferfish privacy; maximal leakage; prompt optimization
\end{abstract}

\section{Introduction}

\subsection{Background and Motivation}

A fundamental security tension exists between the reasoning capability of large language models and their deployment paradigms. Cloud models (proprietary APIs or large open-weight models) offer stronger reasoning, planning, and tool use, but user requests must leave the local device, exposing sensitive data to interception, retention, and misuse by the service provider; local small models (SLMs) protect data on-device but reason weakly. Cloud-local LLM inference systems~\cite{hao2024hybrid,jin2025cecollm,li2025collab,siyan2025papillon} seek both cloud-grade reasoning and on-device data protection, under the core constraint that \textbf{cloud-bound requests must exclude personally identifiable information (PII) to prevent external data leakage}~\cite{das2025security,lukas2023pii}. Yet the question ``how to guarantee the cloud receives only non-sensitive content'' still lacks a decision mechanism that is both safe and practical: a lax decision leaks; a strict one strips task semantics and degrades reasoning quality.

The problem is acute in real scenarios such as medical consultation and financial services. For example, a user types on a local device: ``I am 67, had a stent procedure in the cardiology department of XX Hospital three months ago, and take aspirin daily; I would like advice on post-operative exercise.'' The combination of age, hospital, surgical history, and medication uniquely identifies the individual even without a name or ID number. Forwarding it to the cloud constitutes quasi-identifier-level leakage~\cite{sweeney2002k}; crudely deleting all details leaves the cloud unable to give any useful advice.

\subsection{Three Fundamental Defects of Existing Approaches}

Existing methods fall into three classes. First, entity-level placeholder substitution~\cite{presidio,rehydra,zeng2025privacyrestore} (e.g., Rehydra, Presidio) replaces names, emails, and keys with placeholders and restores them later. It is deterministic and auditable, but covers only \emph{declared} identifiers; it cannot handle semantic/quasi-identifier leakage---where the dehydrated text contains no explicit identifier yet a combination of non-sensitive attributes still uniquely points to an individual. Placeholder substitution also destroys task-critical context, and restoration depends on a local mapping table, failing for semantically equivalent but literally different content. Second, differential-privacy perturbation~\cite{mai2023split,shi2022selective} protects privacy by injecting noise, which necessarily harms the semantic coherence of downstream reasoning, with an inherent privacy-utility trade-off~\cite{duchi2013local}; text-level DP rewriting (e.g., SANTEXT~\cite{yue2021santext}) is subject to the same trade-off. Third, the recent skill-distillation method P2Skill~\cite{ryu2026p2skill} has a local SLM perform decomposition, routing, paraphrasing, and reconstruction while a cloud model guides skill refinement in a loop. P2Skill is closest to our problem, but has two key gaps:

\begin{itemize}
\item \textbf{(i) Cloud dependence during refinement.} Verbatim verification (\cite{ryu2026p2skill}, \S3.3, ``Iterative refinement loop'', steps 2--3) shows that in P2Skill's refinement loop the cloud LLM, in its \emph{reference} role, ``answers the original prompt directly'', and in its \emph{evaluator} role scores the triple $(x, y_s, y_r)$, where $x$ is the PII-bearing original input; its refinement corpus explicitly mixes in PII samples (\cite{ryu2026p2skill}, \S4.1). That is, \textbf{before skill refinement completes, PII-bearing refinement data must leave the device}. For fairness: during \emph{deployment}, P2Skill's outbound requests pass a deterministic identifier matcher and contain only sanitized content, and its supervisor role receives only aggregate statistics---its leakage point lies in the refinement phase, not the inference phase.
\item \textbf{(ii) Coverage and decision gaps.} Its deterministic matcher covers only declared identifiers, missing semantic/quasi-identifier leakage and context loss from over-redaction (\cite{ryu2026p2skill} \S4.4 itself attributes residual leakage to missed detections at the decomposition stage).
\end{itemize}

All three classes additionally share a third defect: sanitization pipelines are one-shot scripts lacking cross-instance reuse and self-improvement. Every new input restarts strategy design or iteration from scratch; the system cannot learn from past successes/failures nor converge over deployment time to fewer iterations and higher success rates.

\subsection{This Work}

We propose \textbf{DR-SL} (Dehydrate-Rehydrate with Self-Learning loop), addressing these gaps at three levels (system overview in Figure~\ref{fig:overview}):

\begin{figure}[t]
\centering
\includegraphics[width=\textwidth,trim={49.3pt 271.9pt 89.2pt 269.2pt},clip]{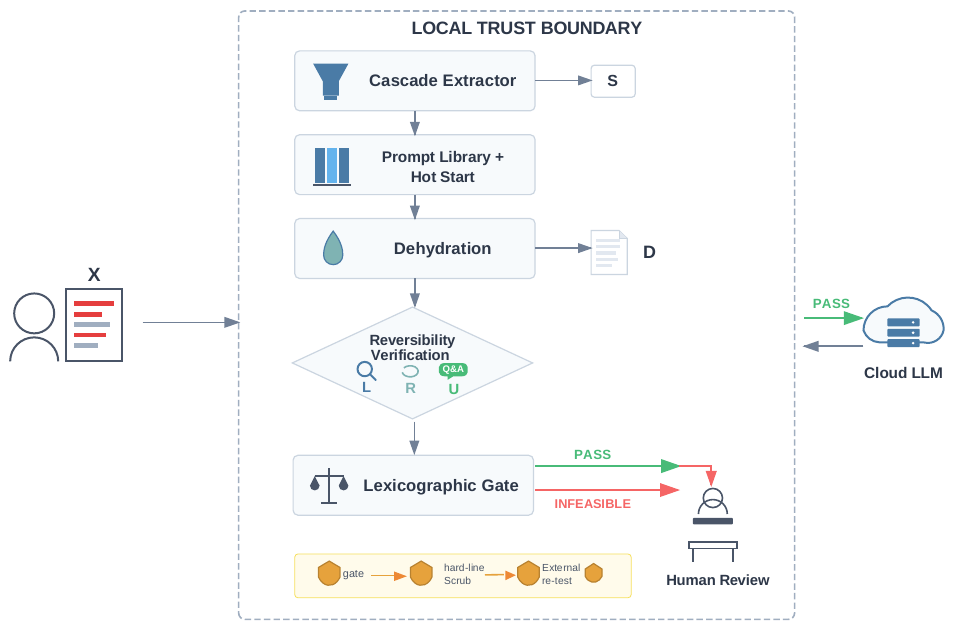}
\caption{DR-SL system overview. Inside the local trust boundary, the cascade extractor derives the sensitive set $S$ from the original material $X$; a type-matched prompt library hot-starts dehydration into the artifact $D$; the two-branch reversibility verification (re-identification probes $L$, restoration $R$, QA utility probes $U$) feeds the lexicographic gate; and the mandatory release chain (local gate $\to$ deterministic hard-line scrub $\to$ external strong-attacker re-test) decides egress: PASS instances send only $D$ to the cloud LLM, while INFEASIBLE instances emit a residual-leakage report and route to human review. The external re-test stage belongs to the destination trust class (same class as the destination cloud; trust semantics in \S\ref{sec:security})---only the dehydrated $D$ ever crosses the boundary.}
\label{fig:overview}
\end{figure}

\begin{enumerate}
\item \textbf{Decision level (reversibility verification).} De-identification completeness is formalized as two complementary, measurable conditions---\textbf{de-identification sufficiency} (Pufferfish semantics~\cite{kifer2012rigorous,kifer2014pufferfish}: the dehydrated artifact alone cannot support re-identification of sensitive information) and \textbf{task-information preservation} (measured by QA probes)---with the cascade extractor's recall explicitly delimiting the guarantee boundary. The decision mechanism runs entirely locally, driven by a local model and prompts; \textbf{nothing leaves the device before the completeness decision is made}. We state honestly: the leakage score produced by a local evaluator is a \textbf{lower-bound witness} of true leakage (Theorem~\ref{thm:fano})---a heuristic certificate, not an absolute upper bound; we strengthen it via offline empirical calibration (\S\ref{sec:security}) and a layered assurance model (\S\ref{sec:layered}).
\item \textbf{Mechanism level (self-learning loop).} Instance-level fixes are elevated to prompt-level global optimization through attribution, experience accumulation, prompt distillation, and versioned updates, with typed prompt libraries and hot-start, so that loop counts tend to decrease with accumulated experience (a design expectation, tested as hypothesis H1).
\item \textbf{Theory level (information-theoretic grounding).} A five-layer theoretical architecture characterizes the method's limits and boundaries---Pufferfish target semantics, Maximal Leakage as the attacker-independent limit~\cite{issa2020operational}, the rate-privacy frontier (utility-privacy Pareto limit), Wyner-Ziv conditional rate-distortion (fidelity limit), and an empirically calibrated approximation layer; we prove a leakage lower-bound theorem (Fano-type), a Pufferfish witness proposition, a feasibility criterion, and a termination theorem.
\end{enumerate}

\subsection{Contributions}

\begin{itemize}
\item A two-branch de-identification-completeness framework centered on \textbf{reversibility verification}, covering de-identification sufficiency (including semantic/quasi-identifier leakage) and task-information preservation (including over-redaction detection), together with a runtime cascade extractor for $S$ and a recall-ceiling analysis;
\item The \textbf{theoretical positioning} of the method: lower-boundedness of the leakage score (correcting a directional error), the Pufferfish witness, the rate-privacy feasibility criterion, termination of the lexicographic iteration, and the explicit stance of ``an empirically calibrated heuristic certificate approaching theoretical limits'';
\item A \textbf{fully local self-learning loop} (the offline stage touches only non-sensitive corpora) and a \textbf{layered assurance model} (soft certificate + deterministic hard line + human fallback), with a complete pre-registered experimental design.
\end{itemize}

\paragraph{Note on the method name.} DR-SL stands for ``Dehydrate-Rehydrate with Self-Learning loop''; throughout, the verified core of the method is the reversibility-verification loop, while the self-learning component is exploratory (\S6.9, T13/T17).

\subsection{Organization}

Section~2 surveys related work; Section~3 formalizes the problem, the runtime extraction of $S$, the two-branch reversibility verification, and theoretical properties; Section~4 presents the self-learning loop and an operationalized measurement framework for offline-online distribution shift (\S4.7); Section~5 gives the theoretical grounding and discussion; Section~6 presents the experimental design and results; Section~7 concludes. Appendix~A contains complete proofs; Appendix~B the proof-verification methodology.

\section{Related Work}

\subsection{Cloud-Local LLM Inference and Privacy Protection}

Cloud-local inference systems deploy a small model on-device and a large model in the cloud, routing simple or privacy-sensitive sub-tasks locally and forwarding complex reasoning to the cloud~\cite{hao2024hybrid,jin2025cecollm,li2025collab,siyan2025papillon}. Representative work includes hybrid SLM/LLM edge-cloud collaborative inference~\cite{hao2024hybrid}, the adaptive cloud-edge framework CE-COLLM~\cite{jin2025cecollm}, a survey of edge-SLM/cloud-LLM collaborative inference and learning~\cite{li2025collab}, and PAPILLON~\cite{siyan2025papillon}, which formulates Privacy-Conscious Delegation with a local privacy-aware proxy deciding which requests to forward---a canonical ``local routing'' design.

For privacy-preserving inference, cryptographic approaches~\cite{gilad2016cryptonets,mohassel2017secureml} offer strong guarantees but require specialized primitives and complex multi-party infrastructure, with high latency and deployment cost; NER-based de-identification~\cite{presidio,rehydra} strips identifiable spans but loses task-critical context; local differential privacy (LDP)~\cite{mai2023split,shi2022selective,duchi2013local} perturbs tokens before transmission, but noise harms coherence and the privacy-utility trade-off is inherent; PrivacyRestore~\cite{zeng2025privacyrestore} removes privacy spans client-side and restores them server-side via meta-vectors---a ``server-side restoration'' paradigm complementary to, but different from, our ``decide-before-egress'' local loop; its restoration relies on cloud processing of meta-vectors, whose own sensitivity warrants assessment (this is our inference; see the table note in \S2.7). Overall, privacy modules are trained as separate components requiring per-deployment adaptation~\cite{ryu2026p2skill}, and a unified ``is de-identification complete?'' decision mechanism is missing.

\subsection{The Sanitize-Rehydrate Paradigm and Reversible Anonymization}

Reversible anonymization is an emerging direction, represented by Rehydra~\cite{rehydra} and Presidio~\cite{presidio}. Their strengths are reversibility, conversational coherence, and session-level identity persistence; but they are essentially per-entity substitution, unprotected against uncovered sensitive types and semantic leakage, and restoration depends on local mapping tables, failing for semantically equivalent but literally different content.

\subsection{P2Skill and Skill Distillation}

P2Skill~\cite{ryu2026p2skill} has a frozen local SLM perform decomposition, PII-aware routing, paraphrasing, and reconstruction via four skill prompts, while a cloud model plays reference, evaluator, and supervisor roles in an iterative loop, distilling ``privacy skills'' into prompts rather than weights. Its merits: no fine-tuning, no learned auxiliary detectors, and a deterministic identifier matcher as the final gate for cloud-bound content; on 160 PRISM prompts (medical, banking, tourism, general knowledge) it improves privacy-preserved inference quality by $1.69\times$ and $3.66\times$ over baselines. As stated in \S1.2 (verbatim-verified and scope-limited): during \textbf{skill refinement}, its reference/evaluator branches require the cloud to read original inputs (\cite{ryu2026p2skill} \S3.3, steps 2--3), constituting a refinement-phase leakage point; deployment-phase egress is gated by the matcher; the supervisor receives only aggregate statistics. Its matcher also covers only declared identifiers, and its self-reported residual leakage stems mainly from missed detections at decomposition (\cite{ryu2026p2skill} \S4.4)---mutually corroborating our recall-ceiling analysis in \S3.2.

\subsection{Information-Theoretic Quantification and Privacy Measures}

Mutual information (MI) is the classic framework for measuring privacy leakage~\cite{belghazi2018mine,heerklotz2025neural}; MINE~\cite{belghazi2018mine} estimates high-dimensional MI with neural networks; the privacy funnel~\cite{eswa2022funnel} optimizes the privacy-utility trade-off by minimizing $I(\text{sensitive};\text{output})$ and maximizing $I(\text{task};\text{output})$. Neural MI estimation requires training and per-dataset iteration, unsuitable for real-time per-instance decisions. Closer to this work are two operational theoretical tools: \textbf{Pufferfish privacy}~\cite{kifer2012rigorous,kifer2014pufferfish} explicitly declares a secret set and discriminative pairs, bounding how much released data can amplify an attacker's posterior odds---fitting ``inference protection over declared secrets''; \textbf{Maximal Leakage}~\cite{issa2020operational} is the supremum, over all randomized functions of the source, of the multiplicative increase in guessing probability upon observing the release---attacker-independent, giving a theoretical ceiling on leakage. We anchor our theory in these two and use reversibility verification as the empirical approximation.

\subsection{Utility Evaluation for Text Sanitization}

For utility evaluation, Text Preserved Similarity (TPS)~\cite{pilan2025truthful} defines sanitization utility as preserved task-relevant information, aligning with our task-information-preservation goal; BERTScore~\cite{zhang2020bertscore} and TAE~\cite{tae2026} provide semantic-similarity measures. QA-based evaluation (e.g., QAFactEval~\cite{fabbri2022qafacteval}) measures information preservation by ``can task questions be answered from the text''---the methodological source of our QA probe (\S3.3). We take the QA probe as the primary utility metric, the rehydration score as a secondary diagnostic, and validate metric effectiveness with human-substitute cross-model studies.

\subsection{Self-Learning and Prompt Optimization}

Related to our self-learning mechanism are self-refinement methods and automatic prompt optimization, whose shared challenge is preventing prompts from overfitting training/validation cases. Our memorization check (\S4.3) and independent generalization scoring target exactly this problem.

\subsection{Positioning}

\begin{table}[H]
\centering
\caption{Comparison of representative methods. Each cell is marked ``fact'' (from the cited source) or ``our inference/claim''.}
\small
\begin{tabular}{@{}p{2.6cm}p{3.1cm}p{2.5cm}p{4.6cm}p{2.2cm}@{}}
\toprule
Method & Leakage coverage & Over-redaction detection & Evaluation/refinement leakage & Cross-instance reuse \\
\midrule
Placeholder substitution~\cite{presidio,rehydra} & Declared identifiers (fact) & None (fact) & None (fact) & None (fact) \\
LDP perturbation~\cite{mai2023split,shi2022selective} & Probabilistic (fact) & None (inherent noise loss, fact) & None (fact) & None (fact) \\
PrivacyRestore~\cite{zeng2025privacyrestore} & Declared identifiers (fact) & Partial (restoration fidelity, fact) & \textbf{Yes} (cloud processes meta-vectors; our inference) & None (fact) \\
P2Skill~\cite{ryu2026p2skill} & Declared identifiers + rewriting (fact) & Partial (reconstruction skill, fact) & \textbf{Yes}: during refinement, cloud reference/evaluator branches read original inputs (fact, \cite{ryu2026p2skill} \S3.3 steps 2--3); none at deployment (fact) & Limited (skill distillation, fact) \\
\textbf{DR-SL (this work)} & \textbf{Declared + quasi-identifiers within extractor recall} (our claim, conditioned on recall) & \textbf{Yes} (task-information-preservation metric) & \textbf{None} (offline: non-sensitive corpora only; online fully local) & \textbf{Yes} (prompt library + hot start) \\
\bottomrule
\end{tabular}
\end{table}

\section{Problem Definition and Method}

\subsection{Problem Formalization}

Let the original material be a random variable $X$, the sensitive set $S=\{s_1,\dots,s_n\}$ (each $s_i$ a sensitive attribute with domain $V_i$; explicit identifiers or quasi-identifiers), the attacker's auxiliary knowledge $A$, and the task-relevant variable $T$. The dehydrator $\mathcal{D}$ maps $X$ to the dehydrated artifact $D=\mathcal{D}(X)$. At runtime $S$ is produced by the cascade extractor (\S3.2). Target conditions:

\paragraph{De-identification sufficiency (Pufferfish semantics).}
We adopt the Pufferfish framework~\cite{kifer2012rigorous,kifer2014pufferfish}: the secret set $\mathcal{S}$ is induced by $S$---for each $s_i$ and value $v\in V_i$, the secret proposition $\sigma_{i,v} := \text{``}s_i = v\text{''}$; the discriminative-pair set $\mathcal{Q}$ contains all same-attribute distinct-value pairs $\{(\sigma_{i,v},\sigma_{i,v'})\}$ and identity pairs (a quasi-identifier combination pointing to a unique individual); $\Theta$ is the attacker's prior-belief set. The dehydration mechanism $P(D\mid X)$ satisfies $(\varepsilon,\mathcal{Q},\Theta)$-Pufferfish de-identification sufficiency iff for all $(\sigma,\sigma')\in\mathcal{Q}$, $\theta\in\Theta$, and outputs $d$:
\begin{equation}
e^{-\varepsilon} \;\le\; \frac{P(D=d\mid \sigma,\theta)}{P(D=d\mid \sigma',\theta)} \;\le\; e^{\varepsilon}. \label{eq:pufferfish}
\end{equation}
Semantics: releasing $D$ amplifies the attacker's prior odds on any discriminative pair by at most $e^{\varepsilon}$. Unlike differential privacy, Pufferfish directly constrains inference about declared secrets without adding noise---precisely what the re-identification branch aims to prevent. The conditional probabilities in \eqref{eq:pufferfish} are relative to attacker knowledge $A$ (operationalized as graded attackers in \S3.3).

\paragraph{Task-information preservation.}
The fidelity goal changes from ``restore the original text'' to ``preserve task-relevant information'': maximize $I(D;T)$. Its operational measure is the QA-probe utility score $U$ (\S3.3). The original rehydration score $R$ (similarity of $\hat X$ reconstructed from $(D,S)$ to $X$) is demoted to a secondary diagnostic signal.

\paragraph{Notation.}
Normalized mutual information $I_{\mathrm{norm}}(D;S) := I(D;S)/H(S) \in [0,1]$; binary entropy $H_2(\cdot)$. The mapping between the information-theoretic threshold $\varepsilon$ and the operational threshold $\tau_L$ is given by Theorem~\ref{thm:fano} ($\tau_L$ corresponds to a leakage \emph{lower-bound} guarantee).

\begin{definition}[Reversible compression]
If there exists a map $f$ with $f(D,S)=X$, $D$ is a reversible compression of $X$ with respect to $S$. This definition is retained as the theoretical object of the rehydration branch (diagnostic).
\end{definition}

\begin{definition}[De-identification completeness]
$\mathcal{D}$'s de-identification of $(X,S)$ is complete iff $D=\mathcal{D}(X)$ satisfies: (i) the operational criteria $L \le \tau_L$ (in the Wilson-interval upper-bound sense, \S3.3) and $U \ge \tau_U$; (ii) these criteria hold under extractor recall $\hat\rho$ and attacker class $A_3$ (conditionality of the guarantee in \S\ref{sec:security}).
\end{definition}

\subsection{Runtime Extraction of the Sensitive Set $S$ (Cascade Extractor)}

The key premise raised in review---``where does $S$ come from at runtime?''---is answered by a cascade extractor:
\begin{itemize}
\item \textbf{(a) Regex/NER high-precision layer}: explicit identifiers (ID numbers, phone numbers, emails, names, accounts), with finite pattern classes and near-complete recall;
\item \textbf{(b) Local-LLM high-recall layer}: quasi-identifiers (age, department, surgical history, and normalizations), via an extraction prompt plus a self-critique second pass dedicated to missed items;
\item \textbf{(c) Combination layer} (DetectCombinable): decides which combinations of extracted attributes constitute quasi-identifiers, emitting candidates for Branch-1 combination questions (\S3.3) and the synthetic-population $k$ test (\S6).
\end{itemize}

\textbf{Extraction recall}: on a gold-annotated set, $\mathrm{Recall} = |S_{\mathrm{found}} \cap S_{\mathrm{gold}}|\,/\,|S_{\mathrm{gold}}|$, reported stratified (explicit / quasi / overall).

\begin{proposition}[Recall ceiling of system leakage, over-extraction-corrected]\label{prop:recall}
Let $G=S_{\mathrm{gold}}$ and $F=S_{\mathrm{found}}$. Branch~1 probes only attributes in $F$; missed attributes are never probed. Writing $L_{\mathrm{measured}}$ for the hit rate over the probed set and $\mathrm{Recall}=|F\cap G|/|G|$, the worst-case decomposition is
\[
L_{\mathrm{system}} \;\le\; \frac{|F|}{|G|}\,L_{\mathrm{measured}} \;+\; \bigl(1 - \mathrm{Recall}\bigr).
\]
When the extractor over-extracts ($|F|>|G|$; in our data $|F|/|G|$ averages 1.43), the first term carries an \emph{inflation factor}: spurious probes that miss dilute $L_{\mathrm{measured}}$, so the true-attribute hit rate can be up to $|F|/|G|$ times larger.
\end{proposition}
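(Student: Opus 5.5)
The plan is to make the quantities concrete and then split the gold set into probed and unprobed attributes. Let $h:F\cup G\to\{0,1\}$ indicate whether an attribute is successfully re-identified from $D$ by the attacker. We interpret $L_{\mathrm{system}}$ as the fraction of gold attributes that leak,
\[
L_{\mathrm{system}}=\frac{1}{|G|}\sum_{g\in G}h(g),
\]
and $L_{\mathrm{measured}}$ as the hit rate over the probed set,
\[
L_{\mathrm{measured}}=\frac{1}{|F|}\sum_{f\in F}h(f).
\]
This assumes that probes on attributes in $F$ report $h$ faithfully. Probe error is exactly the issue handled by the Wilson-interval upper bound and by Theorem~\ref{thm:fano}, so it stays outside this combinatorial statement.

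First, partition $G$ as $(F\cap G)\,\dot\cup\,(G\setminus F)$. This gives
\[
|G|\,L_{\mathrm{system}}=\sum_{g\in F\cap G}h(g)+\sum_{g\in G\setminus F}h(g).
\]
For the unprobed part, the worst case takes $h\equiv 1$ on $G\setminus F$, since nothing is known there. That part then contributes at most $|G\setminus F|/|G| = 1-|F\cap G|/|G| = 1-\mathrm{Recall}$.

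For the probed true attributes, nonnegativity of $h$ gives
\[
\sum_{g\in F\cap G}h(g)\le\sum_{f\in F}h(f)=|F|\,L_{\mathrm{measured}}.
\]
Dividing by $|G|$ yields the first term $\frac{|F|}{|G|}L_{\mathrm{measured}}$. Summing the two pieces gives the stated inequality.

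The inflation remark follows from the same slack. Suppose all hits fall on $F\cap G$ and the spurious probes in $F\setminus G$ all miss. Then the true-attribute hit rate equals $|F|L_{\mathrm{measured}}/|F\cap G|$. This is at least $|F|/|G|$ times $L_{\mathrm{measured}}$ when $F\cap G$ is smaller than $G$, and it equals $|F|/|G|$ times $L_{\mathrm{measured}}$ when $F\supseteq G$. Hence the factor $|F|/|G|$ is attained and the first term cannot be tightened in general.

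The main obstacle is definitional rather than technical. The statement never pins down whether $L_{\mathrm{system}}$ and $L_{\mathrm{measured}}$ are per-attribute fractions or per-instance averages, nor how combination (identity) pairs are counted. I would fix the per-attribute convention above and note that the per-instance case follows by applying the bound per instance. For the average over instances, the ratio $|F|/|G|$ must then be the per-instance ratio; an average of 1.43 only gives the aggregate bound if the ratio and hit rate are uncorrelated, or under a uniform bound. I would state the result with the per-instance ratio to avoid that subtlety.
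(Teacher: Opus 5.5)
Your proof is correct and is essentially the paper's argument in Appendix~A.1. Like the paper, you partition $G$ into $G\cap F$ and $G\setminus F$, bound the unprobed part by $1-\mathrm{Recall}$ using $\mathrm{leak}\le 1$, and bound the probed true part by $\sum_{s\in F}\mathrm{hit}(s)=|F|\,L_{\mathrm{measured}}$ via nonnegativity; your tightness example ($F\supseteq G$, all spurious probes missing) and your caveat that the aggregate $1.43\,L_{\mathrm{measured}}+0.049$ needs per-instance ratios (or an uncorrelatedness or uniform-bound assumption) go slightly beyond the paper and are both sound.
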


Hence every system-level safety conclusion must be conditioned on the measured recall $\hat\rho$ \emph{and} the over-extraction ratio: ``given extractor recall $\ge \hat\rho$ and $|F|/|G| \le \iota$, system residual leakage $\le \iota\cdot L_{\mathrm{measured}} + (1-\hat\rho)$''. On our Stage-1 data ($\hat\rho=0.951$, mean $\iota=1.43$): residual $\le 1.43\,L_{\mathrm{measured}} + 0.049$. Theoretical limits: explicit-identifier recall can approach 1 (finite pattern class); quasi-identifier recall has no finite upper bound---whether an attribute is identifying depends on the attacker's auxiliary knowledge $A$, so recall can only be defined relative to a given $A$ as $\mathrm{Recall}(A)$. \S3.3/\S6 therefore report recall and leakage curves over graded attackers $A_0$--$A_3$.

\subsection{Two-Branch Reversibility Verification}

\paragraph{Branch 1 (re-identification branch / de-identification sufficiency).}
The evaluator receives only the dehydrated artifact $D$ and general knowledge (plus graded auxiliary knowledge), and answers a question set $Q$ constructed from $S$ (Algorithm~1). Probing each attribute $m$ times yields per-attribute hit rates $L_i$ and the overall leakage score
\[
L = \frac{1}{|S|}\sum_i L_i \in [0,1].
\]

\textbf{Statistical protocol.} Per-attribute probes $m = \lceil 75/|S| \rceil$ (total $\ge 75$, ensuring the gate is passable even at zero hits); $L$ is reported as the mean over probes, and the decision uses a \textbf{Wilson confidence interval}~\cite{wilson1927}: pass iff the 95\% Wilson upper bound of $L$ is $\le \tau_L$. $\tau_L$ thus means ``a threshold on the interval upper bound of the hit rate'', resolving excessive binomial noise at small $|S|$ (e.g., at $|S|=5$ a single-point granularity of 0.2 would otherwise be equivalent to zero tolerance).

\paragraph{Graded attackers (operationalizing $A$).}
The ``supremum over $A$'' is operationalized as a maximum over a finite nested attacker set:

\begin{table}[H]
\centering\small
\begin{tabular}{@{}llp{8.2cm}@{}}
\toprule
Level & Auxiliary knowledge & Implementation \\
\midrule
$A_0$ & General knowledge only & Bare local re-identification \\
$A_1$ & Domain statistical priors & Re-identification prompt injected with domain frequency knowledge \\
$A_2$ & Public-data retrieval & Re-identification branch augmented with public-corpus RAG \\
$A_3$ & Linkage-attack database & Synthetic-population $k$ counting (no LLM) \\
\bottomrule
\end{tabular}
\end{table}

Knowledge sets satisfy $A_0 \subseteq A_1 \subseteq A_2 \subseteq A_3$, so the $L(A_i)$ curve should be monotone; non-monotonicity is treated as evaluation instability and triggers review. The system reports the four-level leakage curve (x-axis: attacker level; y-axis: per-instance mean hit rate $\pm$ 95\% Wilson interval, by domain).

\paragraph{$A_3$ k-anonymity test~\cite{sweeney2002k}.}
For the quasi-identifier combination $q$ extracted from $D$, count matching records in a synthetic population (\S6; the controlled world reuses the dataset generator's distribution, the real-statistics world is built from public marginal distributions via iterative proportional fitting (IPF)~\cite{deming1940ipf}; fully synthetic and reproducible):
\[
k(q) = \bigl|\{\, r \in \mathrm{DB} : r\text{'s quasi-identifier combination matches } q \,\}\bigr|.
\]
Semantics: the attacker can at best narrow the target to a set of $k$ individuals; $k=1$ means unique identification. The $A_3$ hit event is $k \le k^{*}$ ($k^{*}=5$ frozen in the protocol), unifying semantics across the four-level curve.

\paragraph{Branch 2 (rehydration branch / diagnostic).}
The evaluator receives $(D,S)$ and reconstructs $\hat X$; $\mathrm{Sim}(\hat X, X)$ (literal matching primary, BERTScore~\cite{zhang2020bertscore} fallback) gives the restoration score $R$. $R$ is demoted to a secondary diagnostic: it detects rehydrator hallucination and over-redaction but is no longer the primary fidelity criterion.

\paragraph{Primary utility metric: QA probes~\cite{pilan2025truthful,fabbri2022qafacteval}.}
Fully decoupled from restoring the original text; directly measures ``task-information availability from the cloud's viewpoint'':
\begin{enumerate}
\item Generate a task-relevant question set $Q_T$ from $X$ plus the task description ($n_q$ questions, not targeting sensitive attributes themselves; filtered and spot-checked);
\item Reference answers $A_{\mathrm{ref}} = \mathrm{ans}(q\mid X)$ (fixed reference model; self-consistency sampling for open questions; known ground truth for structured ones);
\item Target answers $A_D = \mathrm{ans}(q\mid D)$ (same model, same sampling parameters---exactly the deployment condition);
\item Per-item scoring $\mathrm{Agree}_j$: structured items by exact/tolerance match $\{0,1\}$; open items by judge rubric \{correct 1, partial 0.5, wrong 0, contradicted 0\}, the judge seeing only $(q, A_{\mathrm{ref}}, A_D)$; ``cannot determine'' counts 0.25 (frozen);
\item Utility $U = \frac{1}{|Q_T|}\sum_j \mathrm{Agree}_j$, reported as per-instance/per-domain means with bootstrap confidence intervals.
\end{enumerate}

\textbf{Completeness criterion (operational).} Dehydration succeeds $\iff$ WilsonUpperBound$(L) \le \tau_L$ and $U \ge \tau_U$ (lexicographic; Algorithm~2$'$).

\subsection{Question-Set Generation}

\textbf{Algorithm 1 (QuestionSetGen)}: input $S$ and a typed template library $T$; instantiate template questions per attribute by type; generate ``can the individual be uniquely identified?'' questions for DetectCombinable outputs; deduplicate and return $Q$.

\subsection{Lexicographic Iteration and Convergence Criteria}

The bidirectional same-round update of v1.0 could oscillate without termination guarantees. We switch to \textbf{lexicographic optimization}: utility as a hard gate, leakage as a soft objective, single-attribute strengthening, single-step rollback.

\begin{algorithm}[H]
\caption{DR-SL per-instance main loop (lexicographic).}
\label{alg:main}
\KwIn{original material $X$, sensitive set $S$ (extractor output, recall $\hat\rho$), thresholds $\tau_L,\tau_U$, round budget $B=3|S|+5$, stability window $N$, prompt library $P$, local evaluator $M$}
\KwOut{PASS($D^{*}$) or INFEASIBLE(residual-leak report)}
$P_{\mathrm{init}} \leftarrow \mathrm{HotStart}(P,S)$\tcp*{type-matched hot start}
$D \leftarrow \mathrm{Dehydrate}(X,S,P_{\mathrm{init}})$\;
$Q \leftarrow \mathrm{QuestionSetGen}(S)$\tcp*{Algorithm~1}
$(L,CI,U) \leftarrow \mathrm{Evaluate}(M,D,Q,S,X)$\tcp*{$L$ with Wilson interval; $U$ = QA probe}
\While(\tcp*[h]{Phase 1: satisfy the utility gate first}){$U<\tau_U$ \textbf{and} $t<B$}{
  $D \leftarrow \mathrm{LoosenDehydrate}(D,X,S,P)$\tcp*{loosen only; no leakage strengthening}
  exit phase if two consecutive loosens fail to improve $U$ (no-progress)\;
  $(L,CI,U) \leftarrow \mathrm{Evaluate}(M,D,Q,S,X)$\;
}
$\mathrm{skipped}\leftarrow\emptyset$;\quad $\mathrm{triedLoosen}\leftarrow\emptyset$\;
\While(\tcp*[h]{Pass := Wilson upper $\le\tau_L$ and $U\ge\tau_U$}){$\neg\mathrm{Pass}(L,CI,\tau_L,U,\tau_U)$ \textbf{and} $t<B$}{
  $s^{*} \leftarrow \argmax_{s_i\in S\setminus\mathrm{skipped}} \mathrm{LeakContrib}(s_i,D)$\;
  $D' \leftarrow \mathrm{StrengthenAttribute}(D,s^{*},X,S,P)$\tcp*{single-attribute step}
  $(L',CI',U') \leftarrow \mathrm{Evaluate}(M,D',Q,S,X)$\;
  \uIf(\tcp*[h]{hard gate: utility never regresses}){$U'\ge\tau_U$}{
    $D\leftarrow D'$; accept; update stability window\;
  }
  \uElseIf(\tcp*[h]{compensatory loosen, one chance}){$s^{*}\notin\mathrm{triedLoosen}$}{
    $\mathrm{triedLoosen}\leftarrow\mathrm{triedLoosen}\cup\{s^{*}\}$\;
    $D'' \leftarrow \mathrm{LoosenDehydrate}(D,X,S,P)$\;
    \If{$U(D'')>U$}{
      $D\leftarrow D''$; $\mathrm{triedLoosen}\leftarrow\emptyset$; \textbf{continue}\;
    }
    $\mathrm{skipped}\leftarrow\mathrm{skipped}\cup\{s^{*}\}$\tcp*{not strengthenable within gate $\to$ residual}
  }
  \Else{
    $\mathrm{skipped}\leftarrow\mathrm{skipped}\cup\{s^{*}\}$; attribute failure $\to$ experience store\;
  }
  $t\leftarrow t+1$\;
}
\eIf{$\mathrm{Pass}(L,CI,\tau_L,U,\tau_U)$ \textbf{and} window stable}{
  \Return PASS($D$)\tcp*{egress; audit log retained}
}{
  \Return INFEASIBLE($\mathrm{skipped}\cup{}$failing attributes, with hit-question details)\tcp*{to human review; egress refused}
}
\end{algorithm}

Key points: every accepted step maintains $U \ge \tau_U$ (gate invariant; the fidelity criterion \emph{is} task-information preservation, with the rehydration score $R$ diagnostic only); after a successful compensatory loosen, \texttt{triedLoosen} resets (new headroom available); skipped attributes permanently enter the residual set, avoiding infinite retries; consecutive no-improvement loosens terminate Phase~1; the hard round budget $B$ guarantees termination (Theorem~\ref{thm:termination}).

\begin{definition}[Stable pass]
Over window $W=\{t-N+1,\dots,t\}$: max CI upper bound $\le \tau_L$, $\min U_w \ge \tau_U$, and fluctuations of $L$ (and diagnostic $R$) $\le \delta$.
\end{definition}

\textbf{Differences from v1.0}: same-round bidirectional updates removed; the utility gate admits only Phase-1 loosening and non-regression thereafter; the loop has a hard budget $B$; the failure path is a formal INFEASIBLE branch (residual-leakage report), not a deadlock.

\subsection{Theoretical Properties (Graded)}

Grading rule: \textbf{Theorem} = full proof (appendix); \textbf{Proposition} = proof sketch in text, full proof in appendix; \textbf{Hypothesis} = to be tested experimentally.

\begin{proposition}[Directional correctness of the two-branch alarm]\label{prop:direction}
Fix the evaluator's capability. (i) If $I(D;S)=0$ ($D$ independent of $S$), then for any template family $\mathcal{Q}$, Branch-1 hit rate does not exceed the prior guessing rate (no information gain); (ii) if $H(X\mid D,S) > \eta$, then any rehydrator has expected restoration score $\mathbb{E}[R] < 1$. Each branch thus alarms in the correct direction when its sufficiency condition is violated.
\end{proposition}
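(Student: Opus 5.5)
The two parts are independent, so I would prove them separately. Each reduces to a standard information-theoretic fact once the evaluator is modeled as a (possibly randomized) function of its inputs. Throughout, ``fix the evaluator's capability'' will mean the following: the Branch-1 evaluator's answer to any probe $q\in\mathcal{Q}$ is a random variable $\hat s_q = g_q(D,A,W)$, where $W$ is internal randomness independent of $(X,S)$. The rehydrator is likewise $\hat X = h(D,S,W')$.

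\textbf{Part (i).} For a probe about attribute $s_i$, the hit event is $\{\hat s_q = s_i\}$ (or membership in an accepted answer set). I would first condition on $D=d$. Since $I(D;S)=0$ gives $P(S\mid D=d)=P(S)$, and $W$ is independent of everything, $\hat s_q$ is conditionally independent of $s_i$ given the remaining inputs. Hence
\[
P(\hat s_q = s_i) = \sum_{d} P(d)\sum_v P(\hat s_q=v\mid d)\,P(s_i=v) \le \max_v P(s_i=v),
\]
which is exactly the prior (MAP) guessing rate. Averaging over the $m$ probes and the attributes gives $\mathbb{E}[L]\le$ the average prior guessing rate, for every template family $\mathcal{Q}$. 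The only delicate point is the auxiliary knowledge $A$. I would state the result with respect to the prior $P(S\mid A)$, reading the hypothesis as $I(D;S\mid A)=0$, so the bound becomes the $A$-conditioned prior guessing rate. Combination questions are handled identically, with $s_i$ replaced by the identity variable, which is a function of $S$.

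\textbf{Part (ii).} Here I would adopt the normalization implicit in the paper: $R=\mathrm{Sim}(\hat X,X)\in[0,1]$, with $R=1$ iff $\hat X = X$, literal matching being the primary score. I would argue by contraposition. If $\mathbb{E}[R]=1$, then $R=1$ almost surely, so $\hat X = X$ a.s. and $P_e := P(\hat X\neq X)=0$. Since $\hat X$ is generated from $(D,S)$ with independent randomness, Fano's inequality gives
\[
H(X\mid D,S)\le H_2(P_e)+P_e\log(|\mathcal{X}|-1) = 0,
\]
which contradicts $H(X\mid D,S)>\eta\ge 0$. For a quantitative version I would note that $\mathbb{E}[R]\le 1-\kappa P_e$ whenever $\mathrm{Sim}\le 1-\kappa$ on mismatches. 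Inverting Fano then gives $P_e\ge (\eta-1)/\log|\mathcal{X}|$ (or the sharper $H_2^{-1}$-based form), and hence $\mathbb{E}[R]<1$ with an explicit gap.

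\textbf{Main obstacle.} I expect the main obstacle to be making part (ii) honest for the BERTScore fallback. That score can be close to 1 on non-identical texts, which makes the strict inequality purely qualitative unless one assumes a mismatch margin $\kappa>0$. I would therefore state explicitly the assumption ``$\mathrm{Sim}(x',x)=1 \Rightarrow x'=x$'' on a finite (or discretized) alphabet $\mathcal{X}$. Under that assumption the qualitative claim needs only $\eta\ge 0$, and the quantitative gap follows from Fano as above. The closing sentence of the proposition, that each branch alarms in the correct direction, then follows directly from (i) and (ii).
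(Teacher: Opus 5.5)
Your proposal is correct and follows the paper's proof (Appendix A.6) in substance. Part (i) is the same posterior-equals-prior plus Bayes-decision argument, and your Fano contrapositive in part (ii) is a quantitative packaging of the paper's step: $H(X\mid D,S)>0$ means $X$ is not a.s.\ determined by $(D,S)$, so every rehydrator errs with positive probability and the error-complement score has $\mathbb{E}[R]<1$. Your two explicit assumptions, $\mathrm{Sim}(x',x)=1\Rightarrow x'=x$ and reading the hypothesis as $I(D;S\mid A)=0$, are left implicit in the paper. The second is genuinely needed once auxiliary knowledge is allowed: with $D=S\oplus K$ and $A=K$ one has $I(D;S)=0$, yet an $A$-equipped attacker recovers $S$ exactly.
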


\begin{theorem}[Lower-boundedness of the leakage score, Fano-type]\label{thm:fano}
Let an attacker (including the local evaluator) achieve per-attribute hit rate $L_i$ on $s_i$ (domain size $|V_i|$), with $L_i \ge 1/|V_i|$. Then
\[
I(D;S) \;\ge\; H(S) - \sum_i \bigl[\, H_2(1-L_i) + (1-L_i)\log(|V_i|-1) \,\bigr],
\]
and hence $I_{\mathrm{norm}}(D;S) \ge 1 - \sum_i[\cdot]/H(S)$.
\end{theorem}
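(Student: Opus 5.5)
The plan is to apply Fano's inequality attribute by attribute and then combine the per-attribute bounds through the chain rule and subadditivity of entropy. Fix an attacker. Its guess $\hat s_i=g_i(D)$ for attribute $s_i$ is a (possibly randomized) function of $D$ together with auxiliary knowledge that is independent of $S$ given $D$. This gives a Markov chain $s_i \to D \to \hat s_i$ with error probability $P_e^{(i)} = 1-L_i$. Fano's inequality on the alphabet $V_i$ then gives
\[
H(s_i\mid D)\;\le\; H(s_i\mid \hat s_i)\;\le\; H_2(1-L_i) + (1-L_i)\log(|V_i|-1).
\]
The first step is the data-processing inequality; the second is standard Fano. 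Since $H_2(1-L_i)=H_2(L_i)$, the bound is symmetric in how the error is written. The hypothesis $L_i\ge 1/|V_i|$ places us in the regime where the Fano right-hand side is monotone decreasing in $L_i$, so a larger measured hit rate gives a tighter bound. I would note this explicitly, but it is not needed for the inequality itself.

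Next I combine the attributes. By the chain rule and the fact that conditioning reduces entropy,
\[
H(S\mid D)=\sum_i H(s_i\mid D,s_1,\dots,s_{i-1})\;\le\;\sum_i H(s_i\mid D).
\]
Substituting the per-attribute Fano bounds gives $H(S\mid D)\le \sum_i[H_2(1-L_i)+(1-L_i)\log(|V_i|-1)]$. Then $I(D;S)=H(S)-H(S\mid D)$ yields the stated inequality. Dividing by $H(S)>0$ gives the normalized form; when $H(S)=0$ the claim is trivial or vacuous under the paper's convention.

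The main obstacle is interpretive rather than computational: the $L_i$ must be the true error-complement probabilities of a fixed decoder of $D$, not estimates. I would state that the theorem applies to population hit rates. In the measured case, $L_i$ should be replaced by its Wilson lower bound to keep the direction of the bound valid. I would also check that the attacker's auxiliary knowledge $A$ does not carry information about $S$ beyond $D$. If it does, the bound becomes one on $I(D,A;S)$, and I would state the result conditionally on $A$ by running the same argument with every entropy conditioned on $A$. Finally, I would remark that the bound holds for every attacker, so the maximum over the graded attackers $A_0$ to $A_3$ gives the tightest certificate. This is why the bound certifies leakage from below and never certifies safety.
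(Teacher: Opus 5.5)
Your argument is correct, but it takes a more direct route than the paper. The paper applies Fano to the Bayes-optimal attacker, whose error satisfies $P^{*}_{e,i}\le 1-L_i$. It then needs the lemma that $g(p)=H_2(p)+p\log(|V_i|-1)$ is non-decreasing on $[0,(|V_i|-1)/|V_i|]$ to pass from $g(P^{*}_{e,i})$ to $g(1-L_i)$. The hypothesis $L_i\ge 1/|V_i|$ is exactly what places $1-L_i$ in that monotone range.

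You instead apply Fano, after the data-processing step $H(s_i\mid D)\le H(s_i\mid \hat s_i)$, to the measured attacker itself. So you need neither the lemma nor the hypothesis. This is legitimate as long as the guess $\hat s_i$ takes values in $V_i$ and $L_i$ is its exact hit rate.

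The paper's detour buys coverage of attackers that may abstain; its probes allow $\mathrm{UNKNOWN}$ answers, counted as misses. For an estimator with values in $V_i\cup\{\mathrm{UNKNOWN}\}$, the $\log(|V_i|-1)$ form of Fano is not available. Take a uniform binary secret, $D$ independent of $S$, and an attacker that always abstains: then $L=0$ and the formula would assert $I(D;S)\ge 1$. The repair is precisely the paper's argument. Replace abstentions by guesses, which can only raise the hit rate, and then invoke monotonicity, which needs $L_i\ge 1/|V_i|$.

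The same monotonicity is silently needed in your remark about substituting a Wilson lower bound $\underline L_i\le L_i$. The direction of the bound is guaranteed only when $\underline L_i\ge 1/|V_i|$. So your claim that the hypothesis is ``not needed for the inequality itself'' holds only within your no-abstention, exact-rate setting. Your chain-rule/subadditivity combination step coincides with the paper's.
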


\emph{Interpretation} (directional correction, and its limits): measured attack success gives a \textbf{lower bound} on true leakage; stronger attackers can only raise the bound. A local evaluator's failure to re-identify \textbf{cannot} prove that no attacker can. \textbf{The bound certifies leakage, never safety}---this one-sidedness is structural to any Fano-type argument, and we quantify how weak the guarantee is at our operating point: with $\tau_L=0.05$ and a uniform attribute over $|V|=100$ values, passing the gate certifies only $I(D;s) \ge H(s) - [H_2(0.95) + 0.95\log_2 99] \approx 6.64 - 6.59 \approx 0.06$ bits ($\approx$1\% of $H(s)$); for open-domain quasi-identifiers ($|V| \sim 10^4$, e.g., hospital names) the bound is $\approx 0.34$ bits, $I_{\mathrm{norm}} \ge 0.026$---essentially vacuous. \textbf{The operative safety argument of this paper therefore does not rest on Theorem~\ref{thm:fano}}: it rests on (i) the empirical calibration $\hat\Delta$ (\S\ref{sec:security}), (ii) the deterministic hard line (all declared identifiers scrubbed before egress, independent of any model), and (iii) human fallback on INFEASIBLE. The theorem's role is to \emph{witness leakage for rejection and warning} (directions where a lower bound is the useful side), and to make the certificate's evaluator-dependence explicit. The certificate strength increases monotonically with evaluator capability. Full proof: Appendix~A.2.

\emph{Conditioning on auxiliary knowledge.} When attackers use auxiliary knowledge $A$ (\S3.3), all entropy terms are understood conditional on $A$: the theorem applies verbatim to $I(D;S\mid A)$ and $H(S\mid D,A)$, with hit rates measured by $A$-equipped attackers; we omit the conditioning in the statement for readability.

\begin{proposition}[Pufferfish witness: hit rate $\to$ $\varepsilon$ lower bound]\label{prop:witness}
Let the optimal attacker's hit rate on attribute $s_i$ be $L$, with prior $p_{\max} = \max_v P(s_i{=}v)$, $p_{\min} = \min_v P(s_i{=}v)$. Then there exists a discriminative pair $(\sigma^{*},\sigma')$ such that any mechanism satisfying \eqref{eq:pufferfish} must have
\[
\varepsilon \;\ge\; \ln\!\Bigl(\frac{L}{1-L}\Bigr) + \ln\!\Bigl(\frac{p_{\min}}{p_{\max}}\Bigr);
\qquad\text{uniform prior: } \varepsilon \ge \ln\!\Bigl(\frac{L}{1-L}\Bigr).
\]
\end{proposition}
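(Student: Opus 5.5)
The plan is to pass from the attacker's hit rate to a posterior-odds statement, and then from odds back to a likelihood ratio via Bayes' rule. Let the optimal attacker be the MAP guesser $g(d)=\arg\max_v P(s_i{=}v\mid D{=}d)$ (under a fixed $\theta\in\Theta$). By optimality its hit rate is
\[
L=\sum_d P(D{=}d)\,\max_v P(s_i{=}v\mid d).
\]
Since $L$ is an average of posterior maxima, at least one output $d^{*}$ satisfies $P(\sigma^{*}\mid d^{*})\ge L$, where $\sigma^{*}:=\text{``}s_i=g(d^{*})\text{''}$. This is the pigeonhole step. The complement then has mass $\sum_{v\ne v^{*}}P(s_i{=}v\mid d^{*})\le 1-L$.

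Next, choose the competing secret $\sigma'=\text{``}s_i=v'\text{''}$ by averaging over the complement. The posterior odds against the complement set satisfy $P(\sigma^{*}\mid d^{*})/P(\neg\sigma^{*}\mid d^{*})\ge L/(1-L)$. The quantity $P(\neg\sigma^{*}\mid d^{*})$ is a prior-weighted mixture of $P(d^{*}\mid\sigma')$ over $v'\ne v^{*}$. Hence some $v'$ has $P(d^{*}\mid\sigma')\le P(d^{*}\mid\neg\sigma^{*})$, because a mixture cannot lie below all of its components. For that $v'$, Bayes gives
\[
\frac{P(d^{*}\mid\sigma^{*})}{P(d^{*}\mid\sigma')}\;\ge\;\frac{P(d^{*}\mid\sigma^{*})}{P(d^{*}\mid\neg\sigma^{*})}
=\frac{P(\sigma^{*}\mid d^{*})}{P(\neg\sigma^{*}\mid d^{*})}\cdot\frac{P(\neg\sigma^{*})}{P(\sigma^{*})}.
\]
The pair $(\sigma^{*},\sigma')$ is a same-attribute, distinct-value pair, so it belongs to $\mathcal{Q}$. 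Condition \eqref{eq:pufferfish} then forces $\varepsilon$ to be at least the logarithm of this ratio.

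The main obstacle is the prior correction, and it is where I would be most careful. Bounding $P(\neg\sigma^{*})/P(\sigma^{*})$ from below by $(1-p_{\max})/p_{\max}$ gives
\[
\varepsilon\ge\ln\frac{L}{1-L}+\ln\frac{1-p_{\max}}{p_{\max}},
\]
which is not literally the stated form. To land on $\ln(p_{\min}/p_{\max})$, I would instead avoid the complement set and compare directly with a single $v'$. First, $P(\sigma'\mid d^{*})\le 1-L$ for every $v'\ne v^{*}$. Second,
\[
\frac{P(d^{*}\mid\sigma^{*})}{P(d^{*}\mid\sigma')}=\frac{P(\sigma^{*}\mid d^{*})}{P(\sigma'\mid d^{*})}\cdot\frac{P(\sigma')}{P(\sigma^{*})}\ge\frac{L}{1-L}\cdot\frac{p_{\min}}{p_{\max}}.
\]
This route is cleaner, needs no mixture argument, and yields exactly the claimed bound. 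In the uniform case $p_{\min}=p_{\max}$, so the second term vanishes and the bound reduces to $\varepsilon\ge\ln(L/(1-L))$.

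Two things remain to check. First, the bound is only informative when $L>1/2$; otherwise the right-hand side is nonpositive and the claim holds trivially, so I would split off that case. Second, the posterior for $v'$ may be zero, making the ratio infinite; in that case \eqref{eq:pufferfish} fails for every finite $\varepsilon$, and the claim holds a fortiori.
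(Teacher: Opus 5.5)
Your final argument is the paper's proof in Appendix~A.3: averaging gives an output $d^{*}$ with $\max_v P(s_i{=}v\mid d^{*})\ge L$, then $P(\sigma'\mid d^{*})\le 1-L$, then the Bayes expansion with prior ratio at least $p_{\min}/p_{\max}$. The complement/mixture route you set aside is also valid and in fact stronger, because $1-p_{\max}\ge p_{\min}$ whenever $|V_i|\ge 2$, so $\ln\frac{1-p_{\max}}{p_{\max}}\ge\ln\frac{p_{\min}}{p_{\max}}$; for a uniform prior on $k$ values it gives $\varepsilon\ge\ln\frac{L}{1-L}+\ln(k-1)$, which is positive for every $L>1/k$ rather than only for $L>1/2$.
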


\emph{Interpretation}: attacks give a lower bound on the mechanism's required $\varepsilon$ (a leakage witness), same direction as Theorem~\ref{thm:fano}. Numerically: $L=0.9 \Rightarrow \varepsilon \ge 2.2$ (uniform prior); $L=0.05$ gives only a loose bound---the certificate is strong only when the evaluator is strong. Full proof: Appendix~A.3.

\begin{proposition}[Feasibility criterion, rate-privacy frontier]\label{prop:feasibility}
For any dehydrator, task-information preservation satisfies
\[
I(D;T) \;\le\; I(D;S) + I(X;T\mid S),
\]
so the rate-privacy function $U^{*}(\varepsilon) := \max_{p(D|X):\, I(D;S|A)\le\varepsilon} I(D;T)$ satisfies $U^{*}(\varepsilon) \le \varepsilon + I(X;T\mid S)$.
\end{proposition}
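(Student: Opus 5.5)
The plan is to prove the inequality $I(D;T)\le I(D;S)+I(X;T\mid S)$ by expanding a single joint mutual information two ways with the chain rule. The only structural assumption is the Markov chain $(S,T)\to X\to D$, i.e.\ $D=\mathcal{D}(X)$ with possibly internal randomness independent of everything else. Since $S$ is extracted from $X$ and $T$ is task-relevant content of $X$, $D$ sees $(S,T)$ only through $X$. When auxiliary knowledge $A$ is present, I take $A$ to be available to the attacker but not to the dehydrator, which gives the chain $(S,T,A)\to X\to D$.

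\textbf{Step 1.} Expand $I(D;S,T)$ in two orders:
\[
I(D;S,T)=I(D;T)+I(D;S\mid T)=I(D;S)+I(D;T\mid S).
\]
Because $I(D;S\mid T)\ge 0$, this gives $I(D;T)\le I(D;S)+I(D;T\mid S)$.

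\textbf{Step 2.} Bound the residual term $I(D;T\mid S)\le I(X;T\mid S)$. This is the conditional data-processing inequality. Conditional on $S=s$, the chain $T\to X\to D$ still holds, because $D$ depends only on $X$ and independent randomness. One route is to expand $I(D,X;T\mid S)$ two ways; the term $I(D;T\mid X,S)$ vanishes, which yields the bound. Combining with Step~1 gives the first claim.

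\textbf{Step 3.} Pass to the rate-privacy function. For any feasible $p(D\mid X)$ with $I(D;S\mid A)\le\varepsilon$, I need $I(D;T)\le\varepsilon+I(X;T\mid S)$.

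The obstacle is that the constraint is stated conditional on $A$, while Step~1 gives the unconditional $I(D;S)$. My first attempt is to rerun Steps~1--2 conditioned on $A$, obtaining
\[
I(D;T\mid A)\le I(D;S\mid A)+I(X;T\mid S,A).
\]
I would then relate this to the unconditional quantities under the convention used after Theorem~\ref{thm:fano}, that all entropy terms are read conditional on $A$, so that $U^{*}$ is effectively $I(D;T\mid A)$ and the side term is $I(X;T\mid S,A)$.

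If instead the unconditional reading is intended, I would add the assumption that $A$ is independent of $(X,T,S)$, or more weakly that $D\perp A$ together with $I(D;S)\le I(D;S\mid A)$. Independence of $A$ from $D$ gives $I(D;S)\le I(D;S\mid A)$ via $I(D;S,A)=I(D;S\mid A)$ when $I(D;A)=0$, together with $I(D;S)\le I(D;S,A)$. I would state this assumption explicitly.

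Taking the maximum over feasible channels then gives $U^{*}(\varepsilon)\le\varepsilon+I(X;T\mid S)$, since the right side does not depend on the channel.
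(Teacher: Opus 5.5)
Your proof is correct and is essentially the paper's. The paper also bounds $I(D;T\mid A)\le I(D;(T,S)\mid A)=I(D;S\mid A)+I(D;T\mid S,A)$ and then applies the conditional data-processing inequality. It resolves the conditioning mismatch exactly as your first option does, by stating and proving the fully $A$-conditioned version with bound $\varepsilon+I(X;T\mid S,A)$; you are right that the mixed unconditional/conditional reading needs an extra assumption, and your independence route is not in the paper. The only cosmetic difference is that the paper justifies $I(D;T\mid S,A)\le I(X;T\mid S,A)$ by taking $T$ to be a function of $X$, whereas you use the channel structure $D\perp (S,T,A)\mid X$; either suffices.
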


\emph{Interpretation}: if the utility requirement satisfies $\tau_U > \tau_L + I(X;T\mid S)$, the $(\tau_L,\tau_U)$ feasible region is empty and Algorithm~2$'$ necessarily outputs INFEASIBLE. When task information is strongly coupled to sensitive information (the medical example: removing age/surgical history makes advice worthless; $I(X;T|S)$ small), the feasible region shrinks or empties---\textbf{infeasibility is a property of the problem, not a defect of the method}. Full proof: Appendix~A.4.

\begin{theorem}[Termination of the lexicographic iteration]\label{thm:termination}
Algorithm~2$'$ terminates in finitely many rounds, outputting PASS or INFEASIBLE; and after Phase~1, every accepted step maintains $U \ge \tau_U$ (gate invariant).
\end{theorem}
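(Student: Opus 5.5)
The plan is to prove termination by a counting argument on the round counter $t$, supplemented by a bound on the uncounted iterations, and then to prove the gate invariant by case analysis on the three branches of the Phase-2 loop.

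\textbf{Phase 1.} Every loop guard in Algorithm~2$'$ contains the conjunct $t<B$ with $B=3|S|+5$ finite, so it suffices to show that $t$ strictly increases in each iteration that is not otherwise bounded. In Phase~1 I would read the pseudocode as incrementing $t$ per loosen step, i.e.\ the same global round budget. Independently, the no-progress rule (exit after two consecutive non-improving loosens) would bound Phase~1 even without that reading, provided $U$ takes finitely many values. It does: $U$ is an average of $|Q_T|$ scores in $\{0,0.25,0.5,1\}$, so there are only finitely many possible values, and a strictly increasing run of $U$ must be finite.

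\textbf{Phase 2.} Each iteration selects $s^{*}$ and falls into exactly one of three branches:
\begin{itemize}
\item the accept branch, which reaches $t\leftarrow t+1$;
\item the failed-loosen/skip branch, which also reaches $t\leftarrow t+1$;
\item the successful compensatory loosen, which executes \textbf{continue} and bypasses the increment.
\end{itemize}
The third branch is the only potential obstacle, because it also resets \texttt{triedLoosen} and could in principle cycle. I would handle it with a potential argument. Each successful compensatory loosen strictly increases $U$ (the condition $U(D'')>U$), and $U$ ranges over a finite set. Between two increments of $t$ there can be at most $|\{\text{values of }U\}|$ consecutive \textbf{continue}s, so the number of uncounted iterations is finite. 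If one prefers a bound that does not depend on discreteness, the same conclusion follows from monotonicity of $U$ bounded by $1$ with a minimal improvement quantum $1/(4|Q_T|)$.

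I would also check that the $\argmax$ over $S\setminus\mathrm{skipped}$ is always well defined. If $\mathrm{skipped}=S$, no attribute remains to strengthen, and I would treat the loop as exiting to the INFEASIBLE branch. This case needs an explicit convention, and I would state it as the interpretation of the pseudocode. Note also that $|\mathrm{skipped}|\le|S|$ grows monotonically, which complements the $t$ bound.

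\textbf{Output and invariant.} Once both loops exit, the final \textbf{eIf} returns exactly one of PASS or INFEASIBLE, which gives the output claim. For the invariant, I would induct over accepted steps after Phase~1. The base case is the entry to Phase~2. An accept sets $D\leftarrow D'$ only when $U'\ge\tau_U$. A compensatory loosen sets $D\leftarrow D''$ only when $U(D'')>U$, so if the current $U\ge\tau_U$ the invariant is preserved. Skips leave $D$ unchanged.

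The main subtlety is the base case: Phase~1 may exit by budget or by no-progress with $U<\tau_U$. In that case Pass can never hold, and the claim should be read as ``$U$ never decreases, and once $U\ge\tau_U$ it stays so''. I expect stating this precisely, together with the counting of \textbf{continue} iterations, to be the only nontrivial part of the proof.
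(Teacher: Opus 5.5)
Your proof is correct. It shares the paper's skeleton: the hard budget $B=3|S|+5$ on $t$, the no-progress exit for Phase~1, and a case analysis of the Phase-2 branches for the gate. Where it differs is the key termination step.

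\textbf{How the paper's Appendix~A.5 argues.} It bounds Phase~2 by asserting that every round, including an accepted compensatory loosening, increments $t$. It bounds Phase~1 by noting only that $U\le 1$. The first does not match the pseudocode, where a successful loosen executes \texttt{continue} and skips $t\leftarrow t+1$. The second does not by itself make a strictly increasing sequence finite.

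\textbf{What your route buys.} Your observation that $U$ lies on the finite grid of multiples of $1/(4|Q_T|)$ in $[0,1]$ repairs both gaps. It also gives an explicit bound of order $B(4|Q_T|+1)$ iterations. The paper's version is shorter and gives a clean $B$-round bound, but only by implicitly assuming $t$ is incremented before \texttt{continue}.

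\textbf{The base-case caveat.} Your caveat is warranted. The paper's step~4 concludes $U\ge\tau_U$ ``at all subsequent times'' even when Phase~1 exits by no-progress with $U<\tau_U$, and that does not follow.

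\textbf{Repairs to your own write-up.}
\begin{itemize}
\item ``$U$ never decreases'' is false once $U\ge\tau_U$, because an accepted strengthening may lower $U$ within $[\tau_U,1]$. State the invariant instead as: every accepted Phase-2 step either ends with $U\ge\tau_U$ or strictly increases $U$.
\item ``Pass can never hold'' is not justified by the pseudocode, since a successful compensatory loosen can still lift $U$ past $\tau_U$. It is also not needed, so drop it.
\item Your stand-alone Phase-1 argument needs failed loosens not to be committed. The pseudocode assigns $D$ unconditionally, so an improve/fail alternation could otherwise evade the two-consecutive-failures exit. Your primary reading, that Phase-1 rounds also consume $t$, already covers this.
\end{itemize}
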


\begin{hypothesis}[Loop-count decrease; H1]\label{hyp:h1}
Let $n_c^{(v)}$ be the expected initial iteration rounds for type-$c$ material under prompt version $v$. If distilled rules eliminate generalizable failure modes for type $c$, then $n_c^{(v+1)} \le n_c^{(v)}$, and expected total iterations decrease across versions. (v1.0 stated this as a proposition with the memorization check ``guaranteeing generalization''---too strong; it is now a hypothesis tested in \S6. The memorization check only prevents memorizing concrete instances.)
\end{hypothesis}

\section{The Self-Learning Loop}

\subsection{Two Levels of Optimization Objects}

Iteration involves two kinds of optimization objects with vastly different reuse value: \emph{instance-level} fixes (scoped to one material; no cross-instance reuse) and \emph{prompt-level} rules distilled from repeated failures (global scope; reusable). DR-SL's core principle: \textbf{elevate instance-level fixes into prompt-level optimization}. Without this, the system degenerates into per-instance manual tuning and loop counts do not fall.

\subsection{The Attribution--Accumulation--Distillation--Update Loop}

\begin{enumerate}
\item \textbf{Attribution}: on each abnormal branch feedback, besides fixing the current instance, locate the root cause---which sensitive type or context structure (lists, dialogues, long text) the dehydration prompt failed to cover---emitting a structured (material, cause, action) triple.
\item \textbf{Accumulation}: failure cases + root causes + fix actions are persisted in a local experience store, indexed by domain/type/failure mode, searchable and auditable.
\item \textbf{Distillation}: periodically (every $K$ entries or daily), the local model induces general rules from the store, subject to the \textbf{memorization check}---rejecting rules that embed concrete instance values, keeping only abstract patterns.
\item \textbf{Update}: accepted rules merge into dehydration/rehydration/evaluation prompts as new versions with full version history and rollback support. Prompts become the sole carrier of ``de-identification knowledge''.
\end{enumerate}

\subsection{Memorization Check}

\textbf{Definition~4 (Memorization check)} and Algorithm~3 as in v1.0 (literal/normalized value check + quote-length threshold $\ell$ + generalization-score threshold $\theta_g$), with two corrections:
\begin{itemize}
\item \textbf{Scoped responsibility}: the memorization check intercepts only literal/quotation leakage; it \textbf{no longer claims to ``guarantee generalization''} (v1.0 overclaim);
\item \textbf{Independent GenScore}: the generalization score $\theta_g$ is produced by an \textbf{independent scorer}---an offline strong model (on non-sensitive corpora) or a local model from a different family than the distiller; self-scoring is forbidden, eliminating circular reasoning. Formally, a candidate rule $r$ is rejected iff
\[
\exists s \in S:\ s \in r \;\;\lor\;\; \exists c \subseteq X,\ |c|>\ell:\ c \in r \;\;\lor\;\; \mathrm{GenScore}_{\mathrm{indep}}(r)<\theta_g .
\]
\end{itemize}

\subsection{Prompt Library and Hot-Start Reuse}

A layered prompt system: \emph{global baseline prompts} (universal rules), \emph{typed prompt libraries} (medical/financial/credential/dialogue, since sensitive points and strategies differ markedly by type), and \emph{hot start} (new materials initialize from the mature prompt of their type rather than cold-starting). The loop-count-decrease prediction is demoted to \textbf{Hypothesis H1} (\S3.6), tested in \S6.9 (T13).

\subsection{Overall Architecture}

\begin{figure}[H]
\centering
\includegraphics[width=0.85\textwidth,trim={25.2pt 148.8pt 24.8pt 149.2pt},clip]{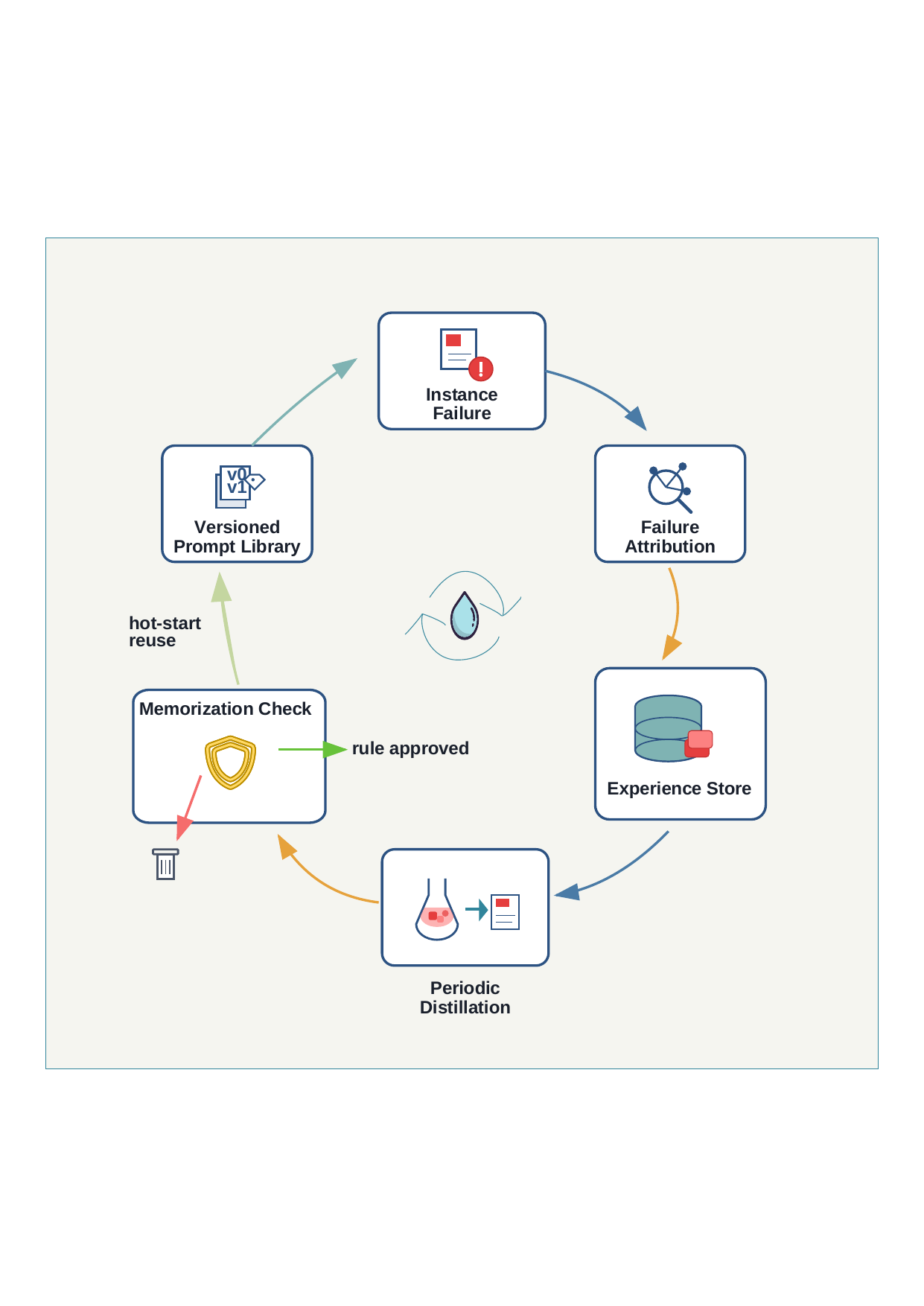}
\caption{Self-learning loop architecture: instance failures are attributed, accumulated in the experience store, periodically distilled into rules, gated by the memorization check, and committed as versioned prompt-library updates that hot-start subsequent instances. The per-instance main loop and its PASS/INFEASIBLE decision exits are specified in Algorithm~\ref{alg:main}.}
\label{fig:looparch}
\end{figure}

\subsection{Two Boundaries}

\begin{itemize}
\item \textbf{Generalization ceiling}: uncovered new types require cold starts; on convergence failure the system identifies a new type and triggers distillation. Loop-count decrease is asymptotic, not one-shot.
\item \textbf{Overfitting risk}: mitigations are the memorization check, distillation from diverse cases, independent generalization scoring, and version rollback.
\end{itemize}

\subsection{Offline-Online Distribution Shift: An Operationalized Measurement Framework}\label{sec:shift}

Prompt libraries distilled offline face domain shift relative to the online distribution. This section gives the physical meaning of each theoretical quantity and frozen measurement recipes, turning ``the generalization problem'' into a repeatably measurable object rather than a verbal concern.

\paragraph{Key claim (safety-efficiency separation).}
Distribution shift affects only \emph{efficiency} (iteration rounds) and \emph{utility}, not \emph{safety decisions}---the release decision (re-identification branch + deterministic hard line) is computed online per current instance; the offline prompt library only initializes and never participates in release decisions. The risk of shift is therefore ``bounded cost'', not ``a safety breach''. We quantify the efficiency side.

\paragraph{Theoretical hook (domain-adaptation bound~\cite{bendavid2010theory}).}
Let $S$ be the offline distillation distribution, $T$ the online distribution, and $h$ the prompt policy. Then
\begin{equation}
\varepsilon_T(h) \;\le\; \varepsilon_S(h) + \tfrac12\, d_{\mathcal{H}\Delta\mathcal{H}}(S,T) + \lambda(S,T). \label{eq:da}
\end{equation}
Physical meaning and measurement of each term:
\begin{enumerate}
\item $\varepsilon_T(h)$ (\textbf{online expected loss}): expected loss when the offline-distilled prompt library serves requests drawn from the online distribution. The loss is instantiated per instance (two frozen forms, both reported): $\ell_1$ = normalized rounds (actual/budget $B$); $\ell_2 = 1-\mathrm{Pass}$ (0-1 convergence failure).
\item $\varepsilon_S(h)$ (\textbf{offline error}): mean loss of the same pipeline, same prompts, same loss definition on the offline held-out corpus (calibration split, non-sensitive by construction). The dataset splits support this natively: calibration$\to$train act as ``historical traffic'' (visible offline), test(+adversarial) as ``online traffic''---$\Delta$ is measurable without real deployment.
\item $d_{\mathcal{H}\Delta\mathcal{H}}$ (\textbf{computable proxy}): the sup over the full function class is incomputable. We freeze a finite probe class $\mathcal{H}_{\mathrm{probe}}$ (TF-IDF + logistic regression; frozen sentence embeddings + logistic regression; LLM zero-shot discrimination) and compute $\hat d_{\mathcal{H}} = 2\max_{h'\in\mathcal{H}_{\mathrm{probe}}} |P_S(h'{=}1) - P_T(h'{=}1)|$. $\hat d_{\mathcal{H}}$ is a \textbf{lower-bound estimate} of the true divergence on the probe class and must not be cited as the true divergence.
\item $\lambda$ (\textbf{joint optimal error}): generally incomputable; handled by availability: (i) back-computed residual $\lambda \ge \Delta - \tfrac12 \hat d$ (primary); (ii) oracle approximation---distill a domain-specific prompt $h^{*}$ on pooled data, $\hat\lambda = \varepsilon_S(h^*)+\varepsilon_T(h^*)$; (iii) no independent claim otherwise.
\end{enumerate}

\paragraph{Three measurable proxies} (none touching sensitive values):
\begin{itemize}
\item \textbf{Domain-discriminator AUC}: sample $n{=}200$ from each of offline/online $\to$ TF-IDF $\to$ logistic regression, 5-fold CV $\to$ AUC~$\pm$~95\%~CI; $\hat d_{\mathcal{H}} \ge 2|\mathrm{AUC}-0.5|$. Reading: AUC~$\le$~0.6 negligible shift; 0.6--0.8 moderate; $>$0.8 severe.
\item \textbf{Embedding MMD}: $\mathrm{MMD}^2 = \mathbb{E}[k(x,x')] + \mathbb{E}[k(y,y')] - 2\mathbb{E}[k(x,y)]$ (Gaussian kernel), permutation test (1000 resamples) for the p-value; corroborating evidence for AUC.
\item \textbf{Type coverage}: instance signature $\mathrm{sig}(x)$ = set of sensitive types; library signature set $\mathcal{C}$; Coverage = fraction of online instances whose signature is exactly matched (primary) or subset-matched (optimistic) in $\mathcal{C}$. Zero model calls, most interpretable; the uncovered-signature list directly guides library expansion.
\end{itemize}

\paragraph{Dose-response experiment.}
Construct intermediate distributions $T_\alpha = (1-\alpha)S + \alpha T_{\mathrm{new}}$, with $T_{\mathrm{new}}$ a controlled new-type variant set (\S6.1) and $\alpha \in \{0,0.25,0.5,0.75,1\}$; measure $\Delta(\alpha)$ and $\hat d(S,T_\alpha)$ jointly and verify monotonicity (Spearman $\rho$ and sign test); report the full table $(\varepsilon_S,\varepsilon_T,\Delta,\mathrm{AUC},\mathrm{MMD}\ p,\text{coverage})$ plus the dose-response figure.

\section{Theoretical Grounding and Discussion}

\subsection{The Five-Layer Theoretical Architecture: From Semantics to Empirical Approximation}\label{sec:fivelayer}

\begin{table}[H]
\centering\small
\caption{Five-layer theoretical architecture.}
\begin{tabular}{@{}p{2.9cm}p{3.6cm}p{4.2cm}p{4.6cm}@{}}
\toprule
Layer & Tool & Quantity & Role \\
\midrule
L1 Target semantics & Pufferfish~\cite{kifer2012rigorous,kifer2014pufferfish} & posterior odds $\le e^{\varepsilon}$ & \textbf{Definition} of de-identification sufficiency (Eq.~\eqref{eq:pufferfish}) \\
L2 Attacker-independent limit & Maximal Leakage~\cite{issa2020operational} & $\mathrm{MaxL}(X\to D) = \log \sum_d \max_x P(d\mid x)$ & Theoretical ceiling of what Branch 1 measures; empirical $L$ can only be a lower-bound approximation \\
L3 Utility-privacy frontier & rate-privacy function~\cite{eswa2022funnel} & $U^{*}(\varepsilon) = \max I(D;T)$ s.t.\ $I(D;S\mid A)\le\varepsilon$ & Theoretical limit of information separation; feasibility criterion (Prop.~\ref{prop:feasibility}) \\
L4 Fidelity limit (secondary) & Wyner-Ziv conditional rate-distortion~\cite{wyner1976rate} & $R_{X|S}(\delta)$ & Theoretical positioning of the rehydration branch \\
L5 Empirical approximation & heuristic certificate + calibration & $L_{\mathrm{local}},\ \hat\Delta,\ \hat L = L_{\mathrm{local}}+\hat\Delta$ & The empirically calibrated certificate of \S\ref{sec:security} \\
\bottomrule
\end{tabular}
\end{table}

\textbf{Positioning statement}: DR-SL approaches the theoretical limit characterized by the rate-privacy frontier under Pufferfish semantics, via an empirically calibrated heuristic certificate; the closeness of approximation improves with evaluator capability (measured by $\hat\Delta$). We do not claim to reach the limit. \textbf{Honesty note on the architecture}: layers L2 (Maximal Leakage) and L4 (Wyner-Ziv) are \emph{positioning} layers in this paper---they delimit what the empirical quantities approximate but are not computed anywhere in our experiments (computable MaxL estimation is future work, \S7). The operative layers are L1 (semantics), L3 (feasibility criterion, used to interpret INFEASIBLE), and L5 (calibrated certificate). Compared with MINE~\cite{belghazi2018mine} / the privacy funnel~\cite{eswa2022funnel}, reversibility verification is training-free and usable online; its price is dependence on evaluator capability---exactly the trade-off characterized by Theorem~\ref{thm:fano} and empirical calibration.

\subsection{Security Model: A Two-Phase Threat Model and the Empirically Calibrated Certificate}\label{sec:security}

\paragraph{Two-phase threat model.}
\begin{itemize}
\item \textbf{Offline phase}: the cloud strong model acts \emph{only} on corpora containing \emph{no sensitive data} (public/synthetic; a data audit guarantees no real PII), serving as an \textbf{offline attacker benchmark and calibrator} for distilling the general prompt library and measuring evaluator bias. Non-sensitivity is guaranteed by construction, so the offline phase constitutes no leakage channel.
\item \textbf{Online phase}: all decisions and iterations are local. Attacker assumptions: (A1) obtains every egressed request $D$; (A2) has general domain knowledge and public corpora, but knows neither the local $S$ nor the experience store; (A3) can invoke arbitrarily strong models for analysis. The attacker class is operationalized as $A_0$--$A_3$ (\S3.3), and safety conclusions are conditioned on ``attacker knowledge $\supseteq A_3$''.
\end{itemize}

\paragraph{Nature of the guarantee: an empirically calibrated heuristic certificate.}
\begin{enumerate}
\item \textbf{Lower-boundedness} (Theorem~\ref{thm:fano}): $L_{\mathrm{local}} \le L_{\mathrm{true}}$. A local pass (Wilson upper bound $\le \tau_L$) is a lower-bound witness against attackers no weaker than the evaluator---not an absolute safety upper bound.
\item \textbf{Empirical calibration}: run the re-identification branches of the local model and a cloud strong model on non-sensitive corpora; measure $\hat\Delta = L_{\mathrm{cloud}} - L_{\mathrm{local}}$ (stratified by domain/type, with CIs), yielding the calibrated certificate $\hat L = L_{\mathrm{local}} + \hat\Delta$.
\item \textbf{Transferability assumption (made explicit)}. $\hat\Delta$ is measured on the \emph{direct-forwarding distribution} of a \emph{non-sensitive calibration corpus}; using it online assumes (a) the evaluator gap transfers from the calibration corpus to sensitive production instances, and (b) it transfers across \emph{regimes} (original vs dehydrated text). Assumption (b) is the load-bearing one and is \textbf{testable}: T16/T18 measure the strong-vs-local gap separately on dehydrated text. Where the gap is regime-dependent, calibration must be regime-specific ($\hat\Delta_{\mathrm{post}}$); we report both and do not claim a single universal $\hat\Delta$. T25b subsequently showed that much of (b)'s literal regime gap is a probe-format artifact, closed locally by the enumeration probe ($L_{\mathrm{enum}}^{\mathrm{local}}{=}1.000$) on the artifacts where the gap was diagnosed; the in-vivo rerun (T25c) qualifies this---local enumeration still misses open-vocabulary categorical values, and the genuinely regime-dependent residual concentrates in the semantic/combinatorial dimension plus that categorical class. This assumption---not the theorems---is the actual fulcrum of the quantitative safety claim, and we state it as such.
\item \textbf{Attacker-independent anchor}: Maximal Leakage~\cite{issa2020operational} takes the supremum over arbitrary attackers; the gap between the empirical $\max L(A_3)$ and a MaxL upper-bound estimate is an explicit measure of residual uncertainty about the attacker class.
\item \textbf{Post-processing safety}: by the data-processing inequality (strong form: SDPI), any cloud-side post-processing of $D$ does not increase leakage: $I(g(D);S) \le I(D;S)$.
\end{enumerate}

\paragraph{Difference from P2Skill (verified and scope-limited).}
P2Skill's deployment-phase egress is likewise gated by its matcher, but its skill-refinement loop requires the cloud to read PII-bearing original inputs (\cite{ryu2026p2skill} \S3.3 steps 2--3), so refinement data must leave the device before refinement completes. DR-SL's offline phase touches only non-sensitive corpora and all online decisions are local: \textbf{no phase requires sensitive data to leave the device}.

\paragraph{Honest boundary.}
DR-SL does not guarantee absolute safety against arbitrary attackers outside the class---consistent with all model-based privacy decision methods. For high-sensitivity scenarios the guarantee comes from the layered assurance model below.

\subsection{Local Model Selection and Capability Dependence}

The reference configuration designates Qwen3.8-27B as the primary evaluator~\cite{qwen3827b} and Ornith-1.5-35B-A3B as the low-resource alternative~\cite{ornith35b}. \textbf{All Stage 1--3 experiments reported here ran with ornith-1.5-35b as the local executor} (throughput decision, protocol deviation D10: $\approx$51 tok/s vs $\approx$20 tok/s for the 27B MLX build); qwen3.8-27b-mlx serves as the capability reference in the size-gradient ablation (T19). The effect of model capability is itself an experimental object: the gradient (ornith-1.5-35b $\to$ qwen3.8-27b $\to$ offline strong model) quantifies ``certificate strength vs evaluator capability'', corresponding to the stratified $\hat\Delta$ estimates.

\subsection{Layered Assurance Model (Core Safety Argument)}\label{sec:layered}

DR-SL's safety claim is \textbf{layered and conditional}:
\begin{enumerate}
\item \textbf{Soft-decision layer (quantitative lower bound)}: two-branch reversibility verification yields the calibrated certificate $\hat L$, task utility $U$, and $k$ values---quantitative, auditable, strengthening with evaluator capability;
\item \textbf{Hard-line layer (deterministic)}: a deterministic identifier matcher (regex/declared identifiers) rejects all declared identifiers before egress, with a network allowlist as backstop---independent of model capability;
\item \textbf{Human-fallback layer}: INFEASIBLE (empty feasible region per Prop.~\ref{prop:feasibility}, or budget exhaustion) $\to$ residual-leakage report (attribute set + hit questions) $\to$ human risk judgment; automatic egress is refused.
\end{enumerate}
Total-leakage decomposition (Prop.~\ref{prop:recall}): system residual leakage $\le \iota\cdot$ residual on declared parts (measured by the soft layer, $\iota=|F|/|G|$ the over-extraction inflation) $+$ missed-extraction part (conditioned on recall $\hat\rho$). Each layer's guarantee type, conditions, and residual risks are explicitly statable.

\paragraph{Local assets are themselves sensitive.}
The experience store, residual-leakage reports, and audit logs contain failure contexts (attribute \emph{types}, hit questions, and---in INFEASIBLE reports---residual sensitive \emph{values}), constituting a new local sensitive asset. Mitigations by design: (i) distillation inputs are aggregated to type-level statistics only---raw values never enter the prompt-distillation path (the memorization check additionally rejects any rule embedding literal values or $>\ell$-token quotes); (ii) residual reports and audit logs never egress---the hard line and allowlist apply to all outbound channels, and reports are rendered only to the local human reviewer; (iii) retention policy: audit logs are append-only with configurable expiry, encrypted at rest where the OS provides it. We state plainly: a compromised local device voids all guarantees---the threat model (A1--A3) concerns the cloud/link adversary, not local malware.

\subsection{Limitations}

\begin{itemize}
\item \textbf{Capability dependence}: certificate strength is bounded by the local evaluator's capability; $\hat\Delta$ calibration only partially compensates (and depends on the availability of an offline strong model).
\item \textbf{Recall dependence}: system guarantees are conditioned on extractor recall $\hat\rho$; quasi-identifier recall has no finite upper bound (\S3.2).
\item \textbf{Attacker-class conditionality}: guarantees are conditioned on $A_3$-level knowledge; attackers outside the class are not covered.
\item \textbf{Threshold calibration cost}: $\tau_L,\tau_U,\tau_R,\delta,N,k^{*}$ require per-domain calibration.
\item \textbf{Inherent infeasibility}: when task information is strongly coupled to sensitive information the feasible region is empty (Prop.~\ref{prop:feasibility}); such problems can only take the human fallback.
\item \textbf{Gold-standard caveats}: annotations are audited automatic labels; reliability is now triangulated by dual-model agreement (T11), cross-family agreement (T22), and a two-rater human study (type-level recall 0.987--1.000, precision 0.959--0.991 against the gold; combination-risk $\kappa=0.696$ under a construct-matched gold, with the divergence analysed alongside T22); utility-metric validity is anchored by cross-model correlation (T12) and a two-rater human utility study (row-level Spearman $\rho=0.839$ vs automatic $U_{\mathrm{sem}}$, meeting the preregistered $\rho\ge0.8$).
\end{itemize}

\section{Experimental Design and Results}

This section gives the empirical blueprint and results; full details are governed by the pre-registered protocol (frozen 2026-08-28/29), with every deviation recorded in the protocol's deviation log (D1--D17) and reported honestly.

\subsection{Datasets}

A four-domain benchmark (medical / banking / travel / general-legal)---isomorphic to P2Skill's 160-prompt PRISM benchmark~\cite{ryu2026p2skill}:
\begin{itemize}
\item \textbf{Core test set}: test 160 (PRISM-isomorphic, 40 per domain) $+$ \textbf{test\_ext 240} (60 per domain, same-generator synthetic extension) $\to$ 100 test instances per domain; the PRISM-isomorphic subset is reported separately for comparability.
\item \textbf{Other splits}: calibration 290 (doubling as the \textbf{offline calibration corpus}, non-sensitive by construction) / train 400 (simulated online stream) / adversarial 80.
\item \textbf{Controlled new-type variant set} $T_{\mathrm{new}}$ (dose-response, \S\ref{sec:shift}): generated from test templates via three controlled transformations---lexical perturbation, new-type-combination injection, format transfer; mixing ratio $\alpha$ is experiment-controlled.
\item The sensitive set $S$ is annotated per a handbook with an audit pipeline; reliability uses the dual-model substitute (T11); quasi-identifiers form a separately annotated subset. All synthetic values are fictional; compliance statement in the dataset documentation.
\item \textbf{Task-prompt provenance}: medical consultation stems are rewritten from the MedQA Chinese subset (physician licensing examination questions, MIT license~\cite{jin2021medqa}) by a template pipeline that replaces every concrete value; no real patient information is involved, and all sensitive attribute values in all domains are synthetic and fictional.
\item \textbf{Mixed-coupling extension}: 120 additional instances (30 per domain) in which only task-critical attribute types are annotated task-relevant (mean coupling ratio 0.40), used for the operating-point measurements of T24.
\end{itemize}

\subsection{Metrics}

$L(A_i)$ (four-level attacker leakage curves, mean hit rate $+$ 95\% Wilson interval~\cite{wilson1927}, by domain); $k$ ($A_3$ synthetic-population count: median $k$, $P(k{<}5)$, share of $k{=}1$~\cite{sweeney2002k}); $U$ (QA-probe utility, primary); $R$ (rehydration, secondary diagnostic); $PQ = IQ\times PP$ (P2Skill-comparable reference~\cite{ryu2026p2skill}; primary results are the $(U,L)$ plane); process metrics (rounds, infeasibility rate, token/latency cost, hot vs cold start); calibration $\hat\Delta = L_{\mathrm{cloud}}-L_{\mathrm{local}}$; extractor recall (explicit/quasi/overall).

\subsection{Baselines}

Direct forwarding (upper bound), all-local small model (lower bound), Rehydra-style placeholder substitution~\cite{rehydra}, Uniform/Selective LDP and P2Skill (citing \cite{ryu2026p2skill}'s same-scale numbers from its Table~3 on PRISM-160; cross-scale comparison forbidden).

\subsection{Ablations and Self-Learning Evaluation}

Remove Branch~1 / Branch~2 (testing Prop.~\ref{prop:direction}), remove the memorization check, remove hot start, local-model size gradient, offline-online dose-response ($\alpha \in \{0,0.25,0.5,0.75,1\}$, testing shift hypotheses and $\hat\Delta$ stability~\cite{bendavid2010theory}). The self-learning convergence curve tests H1 (Stage~2) and its corrected form H1$'$ (Stage~3).

\subsection{Safety Evaluation}

Semantic-level leakage (quasi-identifier combinations; Branch~1 $+$ $k$ test), over-redaction (detected by the task-information-preservation metric), adversarial tests (``seemingly harmless but combination-identifiable'' texts), and a human review subset (50 instances; executed post hoc as human study~2, reported with T22 in \S6.9).

\subsection{Implementation and Hyperparameters}

All hyperparameters frozen in the protocol: $\tau_L{=}0.05$ (Wilson-upper-bound semantics), $\tau_R{=}0.9$ (diagnostic), $N{=}3$, $\delta{=}0.05$, $\ell{=}16$, $\theta_g{\ge}4/5$ (independent scoring), $K{=}50$, $k^{*}{=}5$, temperature 0.3 fixed, 3 random seeds. Model roster (frozen; deviation log D4/D10): local main evaluator/dehydrator ornith-1.5-35b~\cite{ornith35b} (Mac Studio, LM Studio); local capability reference qwen3.8-27b-mlx~\cite{qwen3827b}; frozen embeddings nomic-embed-text-v1.5; offline strong model (calibration, independent GenScore, distiller) qwen3.8-max; low-cost attackers/batch probes qwen3.8-flash; cross-family attacker/scorer glm-5.2 (online free-tier gateway). Statistics: mean$\pm$std, Wilcoxon signed-rank, bootstrap intervals, permutation tests for paired comparisons.

\subsection{Distribution-Shift Experiments (\S\ref{sec:shift} realized)}

Measurement table $(\varepsilon_S,\varepsilon_T,\Delta,\mathrm{AUC}{\pm}\mathrm{CI},\mathrm{MMD}^2\ p,\text{coverage})$ on test/test\_ext vs $T_{\mathrm{new}}$; dose response over $\alpha$ with monotonicity tests; safety-efficiency separation check (per-$\alpha$ false-release rate vs rounds/infeasibility).

\subsection{Proof-Verification Experiments (Appendix~B realized)}

\texttt{sim/proof\_checks.py}: numerical property tests of Theorem~\ref{thm:fano}, Prop.~\ref{prop:witness}, Prop.~\ref{prop:feasibility} on $\ge$1000 random synthetic mechanisms each (exact enumeration of true quantities $+$ bound-direction assertions), boundary-case checks, cross-consistency checks; all archived as a test report.

\subsection{Results (Stages 1--3, executed under the pre-registered protocol)}

\paragraph{T1 Proof numerical verification.} Theorem~\ref{thm:fano}, Prop.~\ref{prop:witness}, Prop.~\ref{prop:feasibility}: 1000 random synthetic-mechanism configurations each (exact enumeration $+$ counterexample search) \textbf{all pass}; boundary cases (independent channel $\to$ vacuous bound; $L\to1 \to$ divergence; uniform-prior special case) verified.

\paragraph{T2 Regex-layer extractor recall (480 instances).} Regexable types: \textbf{97.4\%} (ID card 206/206, bank card 31/31, medical-record numbers 23/23, phone 173/177, passport 26/30, case numbers 26/31); overall explicit recall including name/address/card-tail (needing the LLM layer): 46.1\%---quantitatively confirming the necessity of the cascade design.

\paragraph{T3 Direct-forwarding leakage baseline (strict privacy grading, D9).}

\begin{table}[H]
\centering\small
\caption{Direct-forwarding leakage (Branch~1, structured probes).}
\begin{tabular}{@{}llrl@{}}
\toprule
Attacker & Level & $n$ & mean $L$ [95\% CI] \\
\midrule
ornith-1.5-35b (local) & $A_0$ & 400 & 0.574 [0.539--0.610] \\
ornith-1.5-35b (local) & $A_1$ (domain priors) & 400 & 0.502 [0.466--0.538] \\
qwen3.8-flash & $A_0$ & 160 & 0.768 [0.741--0.795] \\
qwen3.8-flash & $A_1$ & 160 & 0.765 [0.738--0.791] \\
glm-5.2 (cross-family) & $A_0$ & 160 & 0.789 [0.756--0.818] \\
qwen3.8-max & $A_0$ & 160 & 0.779 [0.754--0.803] \\
Synthetic population (skewed $10^5$, conservative) & $A_3$ & 240 & 0.079 (all) / 0.167 (with QIs) \\
\bottomrule
\end{tabular}
\end{table}

\noindent Reading: the baseline is high and consistent across attacker families (0.57--0.79; the local small model is the weakest attacker); highest in the medical domain (strong attacker 0.90), lowest in travel ($\approx$0.60). \textbf{Monotonicity review (triggered per \S3.3 rule)}: $A_1 < A_0$ for both local and flash attackers violates the expected monotonicity; the mandated review attributes this to an \emph{attacker-behavior artifact}---injecting domain priors makes models answer more conservatively (more UNKNOWN) on raw text where the values are already explicit, i.e., the prior changes the answering policy rather than adding usable knowledge. Conclusion recorded: on un-dehydrated text the $A_1$ level is behaviorally confounded; $A_1$'s intended value (priors compensating for removed surface values) must be assessed on dehydrated text (T16). $A_3$ provides an independent quasi-identifier measure consistent with the $k$ distributions of T7. Figure~\ref{fig:attacker}.

\begin{figure}[H]
\centering
\includegraphics[width=0.92\textwidth]{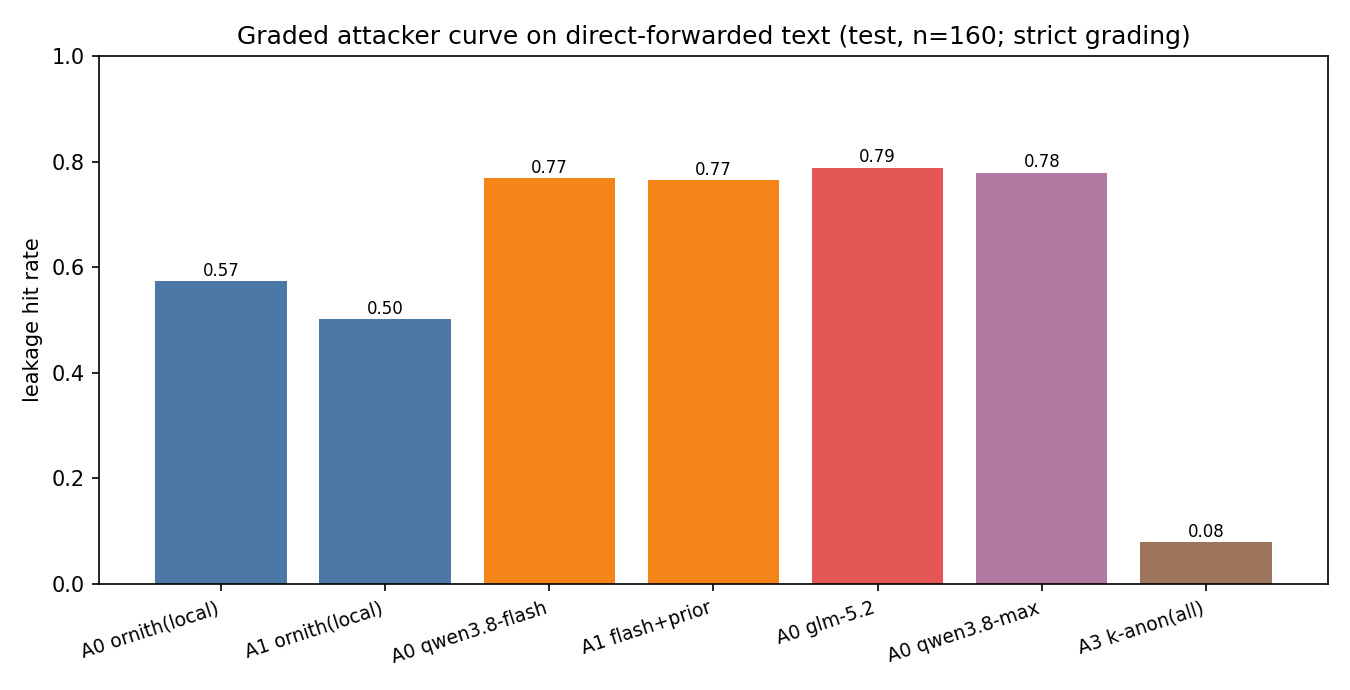}
\caption{Graded attacker curve on direct-forwarded text (test $n{=}160$; strict grading).}
\label{fig:attacker}
\end{figure}

\paragraph{T4 QA-probe utility (direct forwarding, $D{=}X$).} mean $U = \mathbf{0.917}$ ($n{=}160$)---a near-ceiling sanity result verifying the probe pipeline does not depress perfect cases.

\paragraph{T5 Evaluator calibration $\hat\Delta$ (290 calibration instances, $\ge$75 probes, strict grading).} Local ornith attack hit 0.490; strong model qwen3.8-max 0.670; $\hat\Delta = \mathbf{+0.181}$ (95\% CI 0.149--0.215); by domain: banking 0.202 / legal 0.202 / medical 0.184 / travel 0.129 (Figure~\ref{fig:delta}). The local evaluator systematically underestimates leakage by $\approx$0.18---the external certificate uniformly uses $\hat L = L_{\mathrm{local}} + \hat\Delta$.

\begin{figure}[H]
\centering
\includegraphics[width=0.72\textwidth]{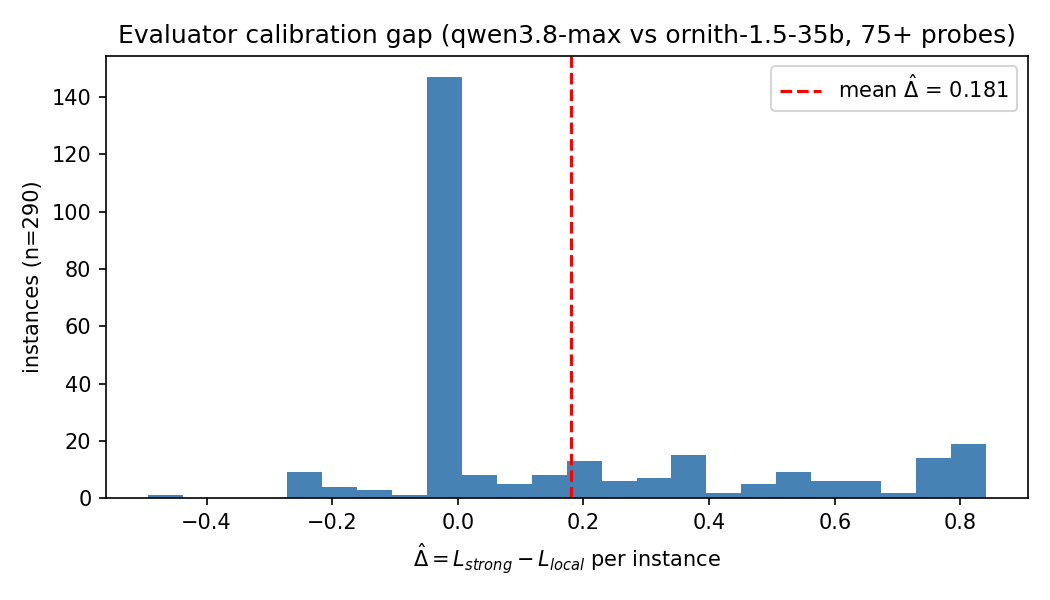}
\caption{Per-instance calibration gap $\hat\Delta = L_{\mathrm{strong}} - L_{\mathrm{local}}$ ($n{=}290$, $\ge$75 probes).}
\label{fig:delta}
\end{figure}

\paragraph{T6 Offline-online distribution shift (\S\ref{sec:shift} framework).} Between test and $T_{\mathrm{new}}$: TF-IDF domain-discriminator AUC $= \mathbf{0.725}$ (CI 0.66--0.79; moderate shift); embedding MMD permutation $p = 0.0$; type-signature coverage: exact 25.6\% / subset 31.9\% (on $T_{\mathrm{new}}$: 18.8\%/22.9\%). The uncovered-signature list is archived and guides prompt-library expansion.

\paragraph{T7 k-anonymity.} Under skewed reference populations, 16.7\%--21.1\% of \emph{quasi-identifier-bearing} instances have $\tilde k<5$ (conservative convention); the corresponding all-instance rates are 7.9\% (test$+$adversarial, $n{=}240$) and 11.9\% (test-160)---denominators are labeled at every occurrence; under uniform large populations $\approx$0--5.3\%---the relativity of $k$ is quantitatively demonstrated (Figure~\ref{fig:kpop}).

\begin{figure}[H]
\centering
\includegraphics[width=0.95\textwidth]{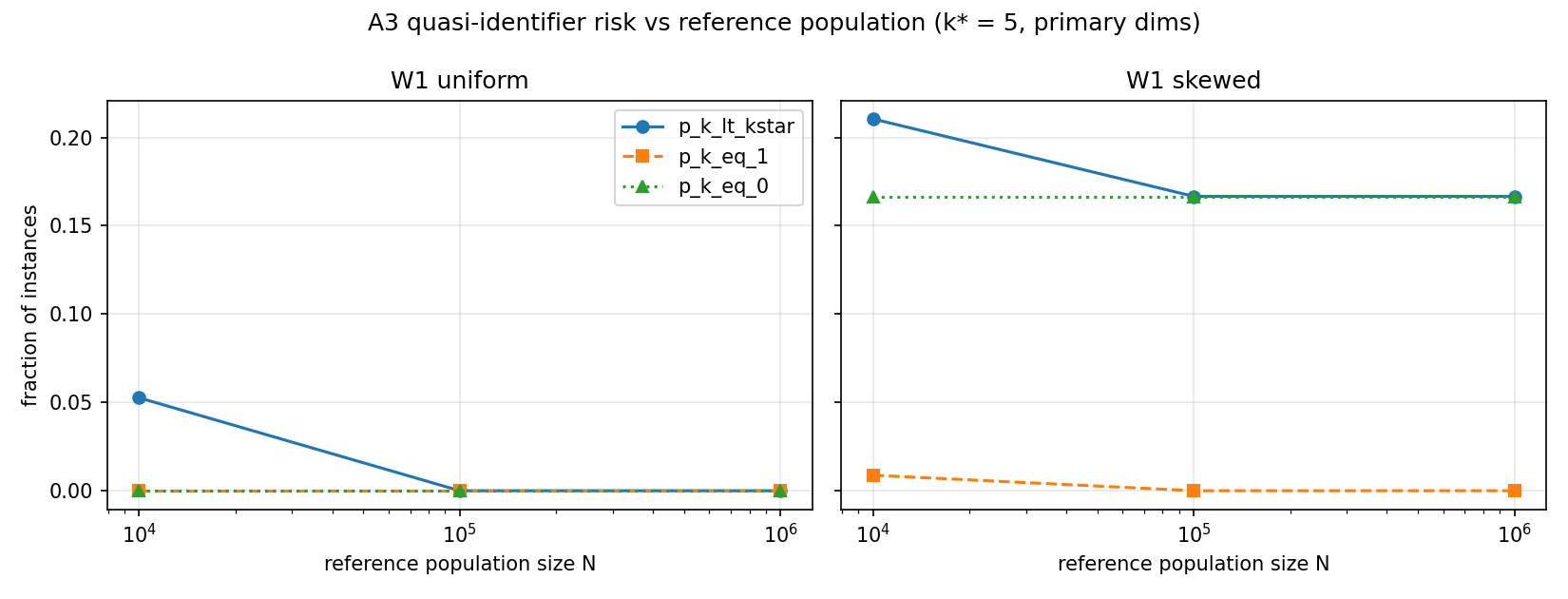}
\caption{$A_3$ quasi-identifier risk vs reference population size and profile ($k^{*}{=}5$, primary dimensions).}
\label{fig:kpop}
\end{figure}

\paragraph{T8 Local closed-loop pilot (8 instances, Algorithm~2$'$, deterministic conservative dehydration, pilot gate $\tau_L{=}0.20$, D8/D9).} Loop mechanics verified: all instances terminate within budget with structured outputs; the utility gate holds ($U \ge \tau_U$ on every accepted step); leakage falls substantially on some instances (Wilson upper bound 0.833$\to$0.39). \textbf{Key finding (fed back into the protocol)}: with only 5--6 probes per instance the zero-hit Wilson upper bound remains $\approx$0.39--0.44---the instance-level $\tau_L$ gate is statistically unreachable, directly evidencing the $\ge$75-probe requirement (D1).

\paragraph{T9 Cascade-extractor full evaluation (480 instances, local ornith-1.5-35b).} Overall recall \textbf{0.951} (95\% CI 0.941--0.960); explicit 0.977 (0.965--0.987); quasi 0.931 (0.915--0.945); on average 2.03 extra values/instance (over-extraction, conservative direction); by domain: medical 0.970 / travel 0.985 / legal 0.934 / banking 0.914 (Figure~\ref{fig:extractor}). System-level conclusions are conditioned per the corrected Prop.~\ref{prop:recall}: with $\hat\rho=0.951$ and measured mean inflation $\iota = |F|/|G| = 1.43$ (median 1.40), residual leakage $\le 1.43\,L_{\mathrm{measured}} + 0.049$---the inflation term is the necessary price of a high-recall extractor that deliberately over-flags.

\begin{figure}[H]
\centering
\includegraphics[width=0.78\textwidth]{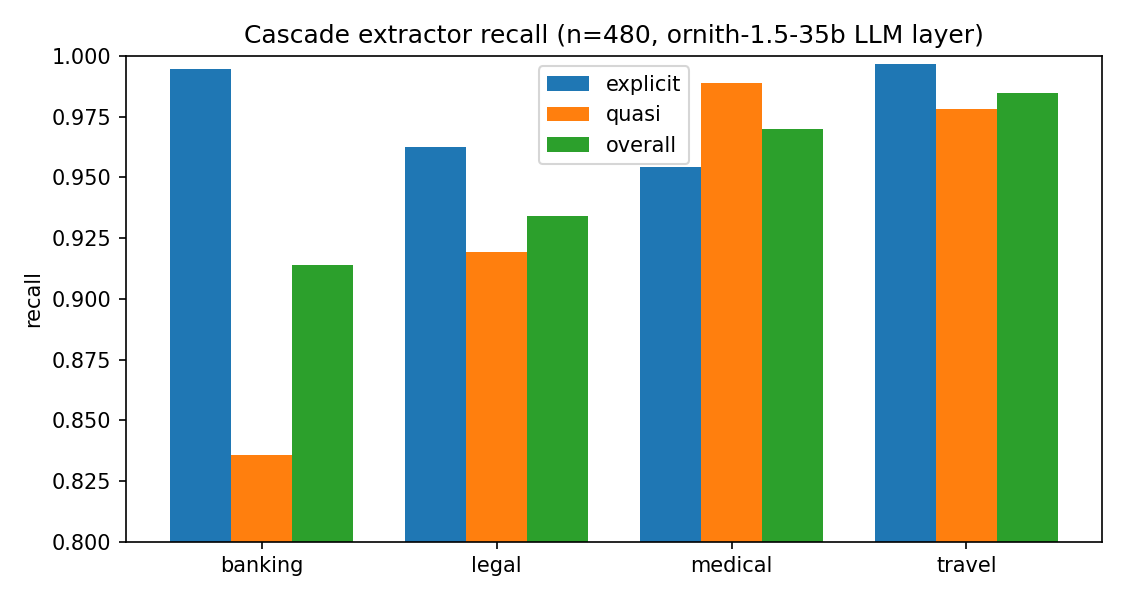}
\caption{Cascade extractor recall by domain ($n{=}480$).}
\label{fig:extractor}
\end{figure}

\paragraph{T10 Main closed-loop experiment (train 400, ornith-1.5-35b, real gates $\tau_L{=}0.05$ $+$ $\ge$75 probes).}

\begin{table}[H]
\centering\small
\caption{Main closed-loop results ($n{=}400$).}
\begin{tabular}{@{}ll@{}}
\toprule
Metric & Result \\
\midrule
PASS rate & \textbf{37.5\%} (150/400); legal 57\%, banking 53\%, medical 30\% (67/220), travel 28\% \\
Mean rounds & 3.08; stream first half 3.52 $\to$ second half 2.62 (decreasing trend) \\
Leakage (paired) & $L$: 0.457 $\to$ \textbf{0.304} (permutation $p\approx0$, incl.\ INFEASIBLE) \\
Task utility (paired) & $U$: 0.973 $\to$ \textbf{0.980} (gate invariant holds; only 2/400 with $U$-drop $>$0.1) \\
INFEASIBLE & 250 instances, all emitting residual-leakage reports, routed to humans by design \\
\bottomrule
\end{tabular}
\end{table}

\noindent Reading: (1) the lexicographic gate works---leakage drops significantly while task utility does not fall; (2) the medical domain's lowest PASS rate is consistent with Prop.~\ref{prop:feasibility} (strong task--sensitive coupling shrinks the feasible region---infeasibility is inherent to the problem); (3) rounds decrease along the stream (3.52$\to$2.62), directional evidence for H1, but Stage~1 did not enable prompt updates and the decrease is partly explained by domain composition; (4) the 37.5\% pass rate is the honest number under strict gates: non-passing instances are designed early warnings with residual lists, not method failures. \textbf{Qualification (see T18)}: the $U$ figures in this table were measured on pre-hard-line artifacts that could retain residual literal mentions of sensitive values (159/160 instances did, mean 3.01 values); part of the ``preserved utility'' was therefore carried by literal leakage. The paired \emph{leakage reduction} remains valid (same measurement basis before/after), while the utility-preservation claim is scoped by T18/T20b/T21. Figure~\ref{fig:loop}.

\begin{figure}[H]
\centering
\includegraphics[width=0.98\textwidth]{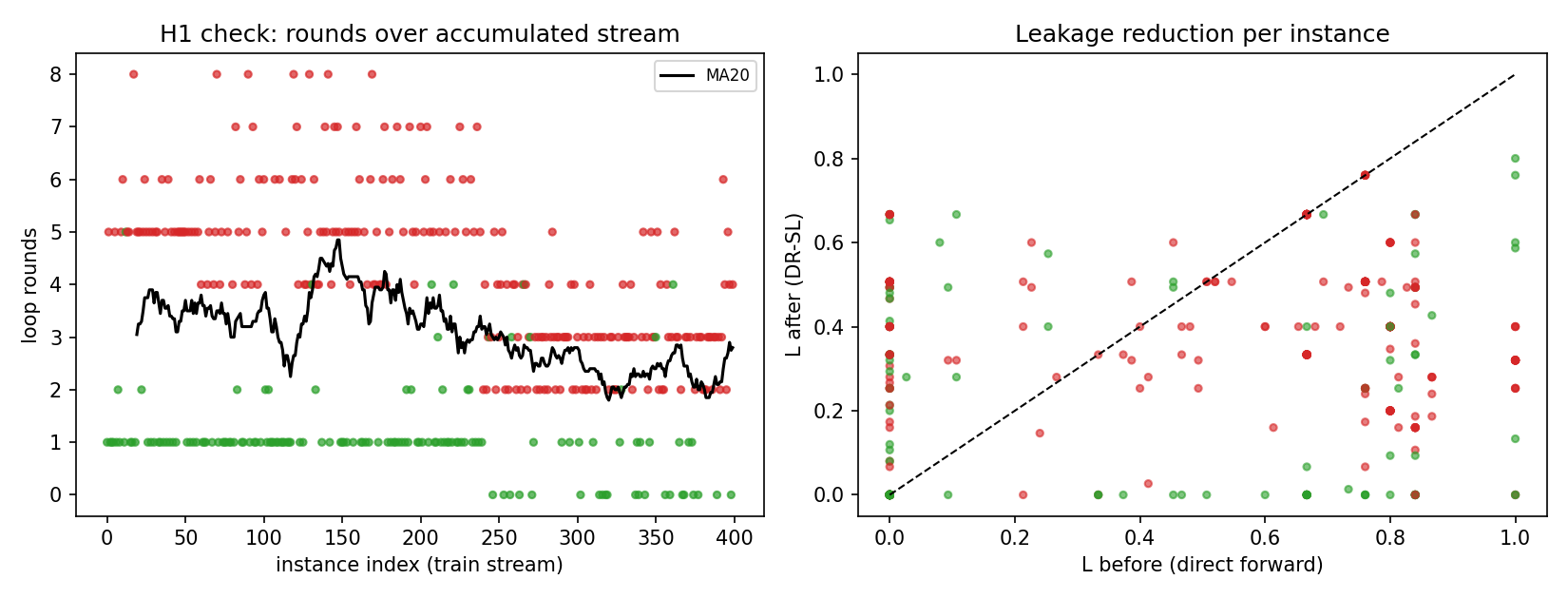}
\caption{Left: loop rounds over the train stream (H1 check). Right: per-instance leakage before vs after DR-SL (green PASS, red INFEASIBLE).}
\label{fig:loop}
\end{figure}

\paragraph{T11 Annotation reliability (zero-human substitute, 480 instances).} Value-level Jaccard agreement between two model families (qwen3.8-max vs ornith-1.5-35b) extracting independently: \textbf{0.845}; recalls against the automatic gold: 0.986 / 0.949. Dual-model agreement substitutes for human Cohen's $\kappa$; the gold standard is audited automatic annotation (stated plainly).

\paragraph{T12 Cross-model validity of the utility metric (100 instances $\times$ 4 degradation levels $\times$ 2 answerers).} Spearman correlation of $U$ between answerers: \textbf{0.958}; degradation profile: $U(D_0$ original$)=0.916$, $U(D_1$ explicit-removed$)=0.923$, $U(D_2$ conservative dehydration$)=0.034$, $U(D_3$ aggressive removal$)=0.0$. \emph{Note}: the $D_0 \to D_1$ step is non-monotone by $+0.007$---within bootstrap noise (the two texts differ only in explicit-identifier spans, which the QA probes by construction do not ask about); monotonicity holds from $D_1$ onward, which is the range that matters. The metric sharply distinguishes ``generalization preserving task information'' from ``deletion destroying it''. The low $D_2$ score reveals the real privacy-utility tension: once task-relevant quasi-identifiers are generalized to valueless tokens, exact-value utility is lost---exactly the over-generalization the $U$ gate prevents, and the explanation of the medical domain's low pass rate. No human criterion study was conducted (Limitations).

\paragraph{T13 Self-learning test (Stage~2, formal H1 test).} A fresh-seed stream of 120 instances (60 distillation $+$ 60 validation). Distillation produced 5 rules (172 failure experiences; all passing the memorization check). Paired validation comparison: \textbf{rounds v0 2.82 vs v1 2.90 ($\Delta{=}{+}0.08$, permutation $p{=}0.68$, n.s.); PASS 46.7\% vs 41.7\%}. \textbf{Honest conclusion: H1 is not supported}---a single distillation round brought no loop-efficiency gain. Attribution: distilled rules concern recognition/recall strategies, decoupled from the loop's dehydration actions; loop behavior is dominated by the deterministic policy. H1 remains an unproven hypothesis; abstract and conclusion downgraded accordingly.

\paragraph{T14 Ablations (Stage~2, 30 instances each).}

\begin{table}[H]
\centering\small
\caption{Ablation results.}
\begin{tabular}{@{}p{4.6cm}p{4.2cm}p{6.2cm}@{}}
\toprule
Ablation & Result & Conclusion \\
\midrule
Remove Branch 1 (no attack evaluation) & True leakage of final $D$: $L{=}0.327$ & Quantified false-release risk: Branch~1 indispensable \\
Remove Branch 2 (no $U$ gate) & Final $U{=}0.111$ & Over-redaction runs away: the $U$ gate is indispensable \\
Remove hot start & Rounds 2.90 (control 2.82) & Under the deterministic policy, hot start's marginal contribution $\approx$0 \\
Remove memorization check & 0 of 5 rules contaminated & No rejection triggered in this batch (weak evidence) \\
\bottomrule
\end{tabular}
\end{table}

\paragraph{T15 Dose-response and safety-efficiency separation (Stage~2, $5\alpha \times 20$ instances).} Rounds flat across $\alpha$ (3.6--4.65); pass rate 0.25--0.35 (efficiency robust to shift); \textbf{at every $\alpha$, PASS outputs' Wilson upper bound is 0.049 $\le \tau_L$}---release safety does not degrade with distribution shift, empirically supporting the safety-efficiency separation claim of \S\ref{sec:shift}.

\paragraph{T16 Post-dehydration attacker re-test (test 160, v0 loop with archived $D$).} PASS 70/160 (43.8\%), mean 4.0 rounds; LLM-refinement success only 11\% (rest falls back to deterministic dehydration).

\begin{table}[H]
\centering\small
\caption{Privacy gain on dehydrated text vs direct forwarding (strict grading). $A_2$ runs RAG over the synthetic population; the $A_3$ row is an original-annotation upper bound (test-160 baseline).}
\begin{tabular}{@{}lrrl@{}}
\toprule
Attacker & $L$ direct & $L$ dehydrated & Privacy gain \\
\midrule
qwen3.8-flash $A_0$ & 0.768 & \textbf{0.436} & $-0.33$ \\
qwen3.8-flash $A_1$ & 0.765 & 0.435 & $-0.33$ \\
qwen3.8-flash $A_2$ & --- & 0.432 & no gain over $A_0$ \\
glm-5.2 $A_0$ & 0.789 & \textbf{0.412} & $-0.38$ \\
glm-5.2 $A_1$ & 0.789 & 0.422 & $-0.37$ \\
Synthetic population $A_3$ & 0.119 & same & --- \\
\bottomrule
\end{tabular}
\end{table}

\noindent Reading: (1) closed-loop dehydration achieves $\approx$45\% relative leakage reduction against strong attackers, consistent across families (Figure~\ref{fig:gain}); (2) retrieval-augmented $A_2$ gains nothing on dehydrated text; (3) residual leakage (0.41--0.44) comes mainly from task-relevant quasi-identifiers protected by the $U$ gate, consistent with T12's tension; (4) $A_3$ is an upper bound computed from original annotations; $k$ re-measurement on dehydrated $D$ is future work.

\begin{figure}[H]
\centering
\includegraphics[width=0.72\textwidth]{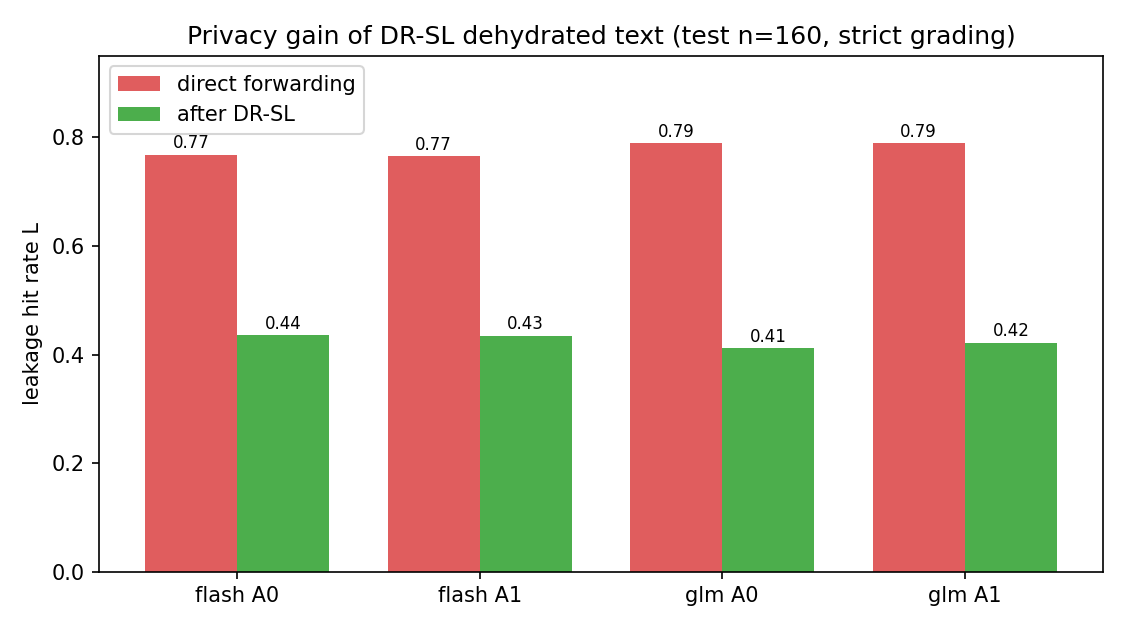}
\caption{Privacy gain of DR-SL dehydrated text (test $n{=}160$, strict grading).}
\label{fig:gain}
\end{figure}

\paragraph{T17 Corrected self-learning hypothesis H1$'$ (Stage~3: rules targeting the extractor).}
The attribution of H1's negative result yields corrected hypothesis H1$'$ (pre-registered, D12): rules distilled from extraction-failure experience and injected into the extractor should raise recall and reduce miss-driven system leakage.
\begin{itemize}
\item \textbf{P21 distillation}: from 55 missed-detection experiences (dominant missed types: income 28, address 7, amount 6), 4 rules passed the memorization check and independent GenScore; 1 was rejected---evidencing the check's effectiveness;
\item \textbf{P22 paired recall validation} ($n{=}240$; disjoint from distillation source): quasi-identifier recall $0.928\to\mathbf{0.943}$ ($\Delta{=}{+}1.5$pp, one-sided permutation $p{=}\mathbf{0.059}$); overall recall $0.954\to0.962$ ($p{=}0.077$); over-extraction cost $+0.04$ values/instance ($\ll 1$ criterion);
\item \textbf{P23 end-to-end} ($n{=}30$ paired; loop uses extractor-derived $S$; system leakage measured against \emph{all} gold attributes): extraction recall $0.975\to\mathbf{0.992}$; $L_{\mathrm{system}}$ $0.420\to\mathbf{0.398}$ ($\Delta{=}{-}2.2$pp, one-sided $p{=}0.69$, n.s.); blind-spot hits 0 for both versions---baseline recall is already high (0.975), leaving no room for the blind-spot mechanism (ceiling effect).
\end{itemize}
\textbf{Verdict (honest reporting)}: H1$'$ obtains \textbf{directional support without statistical significance}---all three indicators move consistently in the right direction with controlled cost, but the effect size ($+1.5$pp) is small relative to variance ($p{=}0.059$ borderline); per pre-registered criterion ($p{<}0.05$) it is judged \textbf{unproven (borderline)}. Power analysis: detecting $\Delta{=}0.015$ with per-instance recall SD ${\approx}0.10$ requires $n{\approx}700{+}$; the 0.93--0.95 baseline ceiling further limits headroom. Contrasted with H1 (negative direction, mechanistic decoupling), H1$'$ is directionally positive with aligned mechanism (rule content \emph{is} extraction strategy), indicating that \textbf{the correct target of self-learning is indeed the extractor}, though gains are limited at high baseline recall.

\paragraph{T18 Hard-line layer: residual-literal audit and post-scrub leakage (test 160).}
Auditing the archived loop outputs $D_{\mathrm{final}}$ of T16 revealed that \textbf{159/160 instances still contained literal sensitive values} (mean 3.01 per instance)---an artifact of span-only replacement missing repeated mentions, i.e., the soft layer alone does not eliminate literal leakage. This is precisely the failure class the \emph{hard-line layer} of \S\ref{sec:layered} exists for, which the T16 pipeline had omitted. Applying the deterministic hard-line scrub (replace every remaining literal occurrence of every declared value) and re-running the external strong attackers (strict grading, $\ge$75 probes):

\begin{table}[H]
\centering\small
\caption{Post-dehydration leakage by subgroup, before vs after the deterministic hard-line scrub.}
\begin{tabular}{@{}llll@{}}
\toprule
Subgroup & Attacker & $L$ soft-layer only (T16) & $L$ after hard-line scrub \\
\midrule
PASS ($n{=}70$) & qwen3.8-flash $A_0$ & 0.486 & \textbf{0.000} \\
PASS ($n{=}70$) & glm-5.2 $A_0$ & 0.432 & \textbf{0.000} \\
INFEASIBLE ($n{=}90$) & qwen3.8-flash $A_0$ & 0.398 & \textbf{0.000} \\
INFEASIBLE ($n{=}90$) & glm-5.2 $A_0$ & 0.396 & \textbf{0.000} \\
\bottomrule
\end{tabular}
\end{table}

\noindent Reading: (1) the pre-scrub PASS$>$INFEASIBLE inversion (0.486 vs 0.398) is fully explained by residual literals---PASS instances exit the loop earlier and undergo fewer strengthening passes, leaving more unreplaced duplicate mentions; (2) after the hard line, \emph{literal-value} leakage measured by two external strong attackers is zero on all 160 instances and both subgroups---the layered model's deterministic layer, not the soft certificate, carries this guarantee, exactly as \S\ref{sec:security} claims; (3) what remains outside this measurement is \emph{semantic/combinatorial} inference not expressible as literal declared values---covered separately by the $A_3$ k-anonymity dimension (T7: 7.9--16.7\% under skewed populations) and by the human-review path for INFEASIBLE instances. We state the scope of ``0.000'' precisely: strict literal grading over declared-attribute probes; it is not a claim of zero semantic leakage.

\textbf{The scrub's utility cost (T20b), and what it exposes.} Re-measuring the QA-probe utility on scrubbed text gives \textbf{U = 0.000} (all 160 instances, both subgroups; tolerant scoring). The mechanism is diagnostic: the loop's certified utility ($U\approx0.97$ in T10/T16) was partly \emph{carried by the residual literals themselves}---on this benchmark every quasi-identifier is task-relevant by annotation design, so ``utility'' and ``literal sensitive value'' coincide, and any sanitizer that truly removes the literals removes that exact-value utility with them. Two consequences, stated plainly:
\begin{itemize}
\item \textbf{Retraction of scope on the utility headline.} The claim ``leakage reduced while $U\ge0.97$ is preserved'' holds \emph{only} for the pre-hard-line artifact, i.e., utility was in part purchased with the very leakage the hard line later removes. The corrected statement: under exact-value utility, the achievable frontier on task-coupled benchmarks runs from (direct forwarding: $U{=}0.92$, $L{=}0.77$) through (loop output pre-hard-line: $U{=}0.97$, $L_{\mathrm{strong}}{=}0.44$) to (fully sanitized: $U{\to}$ tolerance-limited, $L_{\mathrm{strong}}{=}0.00$); T21 locates the middle point under principled generalization (all occurrences generalized, none deleted).
\item \textbf{Release procedure upgraded (protocol deviation D13).} A PASS from the local soft gate is \emph{necessary but not sufficient} for release: on 70/160 locally certified instances, external strong attackers still extracted $L\approx0.43$--0.49 pre-hard-line (the certification gap, quantifying the reviewer-anticipated one-sidedness in vivo). The mandatory release chain is therefore: local lexicographic gate $\to$ deterministic hard-line scrub $\to$ \emph{external strong-attacker re-test on the final artifact} (the ``pre-egress leakage detection'' of the original design, now elevated from an optional metric to a compulsory gate) $\to$ egress or human review. Under this chain, literal leakage at egress is 0.000 on all 160 test instances (two independent strong attackers), at the utility cost reported in T20b/T21.
\end{itemize}

\textbf{Trust model of the external re-test (addressing the self-referential concern).} Under assumption (A1), submitting $D$ to an external re-test service is itself an egress event. We therefore fix the trust semantics explicitly: \emph{the re-tester belongs to the same trust class as the destination cloud} (in our experiments, the same provider family that would serve the request). The re-test is then a \emph{controlled first disclosure to the destination trust class}, and its value is repositioned accordingly: (i) destination selection---release only to providers whose measured post-sanitization extraction stays below threshold; (ii) retention control---the re-test payload is exactly the intended egress artifact, so nothing beyond the planned disclosure occurs; (iii) catching what the local gate missed \emph{before} the destination performs further processing, caching, or training. The residual caveat is stated plainly: if re-tester $\ne$ destination, the trust model must extend to that third party; if re-tester $=$ destination, a failed re-test prevents \emph{subsequent} processing but not the first exposure. Localizing the strong re-test (or running it in a TEE) removes the caveat but reopens the certificate gap quantified by T19 (local-class capability saturates below strong-attacker level).

\textbf{Incremental value of the re-test layer, decomposed.} For \emph{declared} attribute values, verifying ``no declared literal remains in $D$'' is pure string matching---free, deterministic, local, and already performed by the hard-line layer. The external LLM re-test adds value exactly where the hard line is blind: (i) literal values of \emph{missed} attributes (the $1-\hat\rho$ fraction of Prop.~\ref{prop:recall}); (ii) \emph{semantic/combinatorial} inference not expressible as declared literals (the residual risk also probed by the $A_3$ k-dimension, T7). The ``0.000'' reported above is thus scoped as: strict literal grading over declared-attribute probes by two strong attackers; zero hits per instance out of $\ge$75 probes (Wilson upper $\approx0.048$ each), pooled over $160\times\ge75{=}12000$ probes (95\% upper bound $<0.001$). It is not a claim of zero semantic leakage.

\textbf{Chain semantics for utility, and the effective automatic-release rate (made explicit).} The mandatory chain re-checks utility \emph{after} the hard line, under whichever utility metric is declared active. Under the exact-value metric on this fully task-coupled benchmark, post-scrub $U=0.000<\tau_U$ for all 160 instances: the \textbf{effective automatic-release rate is 0\%}---every instance routes to human review with a certificate package (residual report $+$ measured frontier position). We state this as the empirical incarnation of Prop.~\ref{prop:feasibility} rather than hiding it: when every quasi-identifier is a task parameter ($I(X;T\mid S)\approx0$), no non-degenerate automatic operating point exists, and the correct system behavior is exactly this degeneration to ``certify $+$ route to human''. The applicability question---which corpora admit non-degenerate automatic release---is answered quantitatively in T23 (semantic utility) and T24 (mixed coupling).

\textbf{Why the hard line comes last, not first (ordering rationale).} A scrub-first design would be simpler---deletion is free and absolutely safe---but deletion is all it can do: under deletion-only sanitization, utility collapses to the placeholder corner ($U_{\mathrm{sem}}{=}0.228$ on test-160, Table~\ref{tab:usem}). Running the loop \emph{first} buys \emph{generalization} (age bands, amount ranges), which is what preserves semantic utility (0.431, $1.9\times$); the hard line then mops up whatever literals the generalization step missed (the T18 residual-literal audit found 159/160 pre-hard-line artifacts carrying residual values), and the external re-test certifies the result at 0.000. The ordering is thus dictated by the measured generalization-vs-deletion utility gap, not by convention: generalize what you can, delete what you must, verify what leaves.

\textbf{The post-dehydration certification gap, quantified ($\hat\Delta_{\mathrm{post}}$).} On the 70 locally PASS-certified artifacts (local Wilson upper $\le0.049$), external strong attackers measured $L\approx0.43$--$0.49$ pre-hard-line: the dehydrated-regime gap $\hat\Delta_{\mathrm{post}}\approx0.44$ is ${\approx}2.4\times$ the raw-text calibration $\hat\Delta=+0.181$ (T5)---calibration is regime-dependent, as \S\ref{sec:security} now requires. Mechanism diagnosis (T25) localizes the miss precisely: asked in a \emph{comprehension} format to enumerate every concrete personal-attribute value in the same artifacts, the local model lists the residual literals with rate 0.992 (29/30 instances complete)---it \emph{sees} the values---yet under the adversarial probe template it answers UNKNOWN. The gap is the probe format/role, not comprehension. The remedy set therefore includes better local probes (enumeration-style scoring, validated in T25b) in addition to the mandatory external re-test and hard line. This is the sharpest in-vivo evidence for the one-sidedness argument of \S3.6: a local pass certifies ``this evaluator found nothing'', never ``nothing is findable''.

\paragraph{T19 Local model size gradient (paired, 60 test instances).}
Same-instance paired comparison of local attackers on raw text: ornith-1.5-35b (3B-active MoE) $L=0.531$ vs qwen3.8-27b-mlx (27B dense) $L=0.556$ ($\Delta=+0.024$, permutation $p=0.26$, n.s.); both sit far below the online strong attacker qwen3.8-max ($0.779$). Reading: within the locally deployable class, attack capability \emph{saturates}---the decisive capability gap is local-vs-cloud, not small-vs-medium local. For the certificate, this means (i) upgrading the local evaluator within the deployable class yields little; (ii) the $\hat\Delta$ calibration against an offline strong model (\S\ref{sec:security}) is the mechanism that matters---consistent with $\hat\Delta=+0.181 \gg 0.024$.

\paragraph{T20 Cost accounting (process metrics).}
Main loop (train 400, ornith-1.5-35b, $\ge$75-probe evaluations): mean \textbf{203 s/instance} (median 166; max 1177), $\approx$10.2 LLM calls/instance (mean rounds 3.08 $\times$ [attack+QA] $+$ dehydrate/refine $+$ final audit). Total wall-clock is 22.5 h at concurrency 1. Test-160 loop (T16): mean 240 s/instance. Hot-vs-cold: under the deterministic policy hot start changes rounds by $\approx$0 (T14), so cost is dominated by the $\ge$75-probe evaluation bundle; reducing probes (at the price of Wilson-gate power, D1) is the primary cost lever. Per-instance egress latency budget: interactive use requires parallelizing probe bundles or shrinking $|Q|$ with interval-corrected gates---reported as an engineering limitation, not hidden.

\paragraph{T21 Same-scale baselines and the honest Pareto frontier (test 160).}
Two classical baselines share the 160 PRISM-isomorphic instances, attack bundles ($\ge$75 probes, strict grading), and QA probes (tolerant grading) with DR-SL: \textbf{placeholder substitution} (every annotated value becomes a typed token) and \textbf{selective LDP} (every annotated value becomes a random same-type value, the PRISM perturbation protocol).
Together with the DR-SL operating points, they complete the leakage-utility plane:

\begin{table}[H]
\centering\small
\caption{Leakage-utility plane on test-160 (strict-literal $L$ vs exact-value $U$; strong attackers flash/glm).}
\begin{tabular}{@{}lccc@{}}
\toprule
Operating point & $L$ (flash) & $L$ (glm) & $U$ \\
\midrule
Direct forwarding & 0.768 & 0.789 & 0.917 \\
Loop output, pre-hard-line (T16) & 0.436 & 0.412 & 0.97$^{\dagger}$ \\
\textbf{Principled generalization (all occurrences)} & \textbf{0.000} & \textbf{0.000} & 0.033 \\
Placeholder substitution & 0.000 & 0.000 & 0.000 \\
Selective LDP & 0.057 & 0.049 & 0.075 \\
\bottomrule
\end{tabular}
\\[2pt]
{\footnotesize $^{\dagger}$ utility partly carried by residual literals (T18); not an achievable safe operating point.}
\end{table}

\noindent Reading, without overclaiming: (1) under the strict-literal leakage metric, principled generalization (the fixed deterministic policy: every occurrence of every declared value generalized or removed, task-relevant values generalized to bands/categories rather than deleted) achieves $L=0.000$ against both strong attackers---the same leakage corner as placeholders, at higher exact-value utility (0.033 vs 0.000) because numeric generalizations (age bands) remain tolerance-matchable while placeholders destroy everything; (2) selective LDP retains chance-level utility (0.075: random same-type values occasionally fall within numeric tolerance or match by luck) but leaks more ($L\approx0.05$, residual judge-question identifiability from preserved structure); (3) \textbf{metric caveat, stated plainly}: exact-value QA utility is a \emph{conservative lower bound} on downstream usefulness---it credits a random wrong-age draw that happens to land in tolerance while debiting a truthful age-band generalization that a human advisor would find usable. A full utility ranking requires the open-ended judge-rubric or human study (future work, \S7); we therefore claim domination over placeholders (same $L$, higher $U$) and a \emph{leakage} advantage over selective LDP ($0.000$ vs $0.05$), not a total-order utility advantage; (4) the loop's distinctive value is not the corner itself---it is \emph{deciding per instance} whether a safer point is reachable without destroying the task (Prop.~\ref{prop:feasibility}), certifying what it releases, and emitting residual reports for what it cannot.

\paragraph{T22 True cross-family annotation agreement (addresses the same-family concern on T11).}
T11's dual-model agreement (0.845) paired ornith-1.5-35b with qwen3.8-max---both of Qwen lineage, inviting a same-family discount. We therefore re-ran the full 480-instance extraction with glm-5.2 (a genuinely different architecture family): recall against the automatic gold \textbf{0.983}; value-level Jaccard \textbf{0.817} vs qwen3.8-max and \textbf{0.785} vs ornith. Cross-family agreement is thus slightly lower than, but comparable to, the same-family figure---three architecturally independent extractors converge on the same value sets at $J\approx0.79$--$0.85$ with recalls 0.95--0.99, which is the reliability evidence the gold standard rests on. A two-rater human study (human study~2) provides the definitive check: two annotators---the second and third authors, independent of the pipeline implementation---independently listed all sensitive items and judged combination re-identification risk on 50 test instances. Against the automatic gold, type-level recall is \textbf{1.000}/0.987 and precision \textbf{0.959}/0.991 (raters 1/2)---the gold's sensitive-item coverage is validated by humans. On combination-risk judgements the two humans agree perfectly with each other ($\kappa=1.000$) but diverge from the automatic \texttt{combos} label ($\kappa=-0.083$): the constructs differ---the automatic label fires only on quasi-identifier combinations (localizer $\times$ demographic), while humans also count explicit identifiers (a full address or phone number trivially ``uniquely locates'' a person). Under a construct-matched gold (combos \emph{or} any explicit identifier present), agreement is 94\% with $\kappa=0.696$; per the preregistered fallback ($\kappa\in[0.6,0.8)$: report and analyse divergence), all three residual disagreements are pure-quasi-identifier banking instances (occupation$+$income$+$city$+$work\_years$+$amount) that the rule flags but both humans judged non-unique---humans are more optimistic than the rule exactly where no explicit identifier exists, so the automatic label errs on the conservative (safe) side. We also note the prevalence effect: at 92\% gold positive rate, $\kappa$ penalizes skewed marginals despite 94\% raw agreement.

\paragraph{T23 Semantic utility metric (the decisive measurement).}
T21 flagged that exact-value QA utility is a conservative lower bound that penalizes honest generalization. We therefore measure utility with a \emph{reference-based semantic rubric}: an LLM judge sees the original material $X$ and the processed artifact $D$ and scores whether $D$ still supports essentially the same useful, correct advice (1.0 / 0.5 / 0.0; altered values count as loss). Judge: qwen3.8-max (primary, all points), glm-5.2 (reliability, 160 paired items; Spearman $\rho = 0.757$). Results on the fully coupled test-160:

\begin{table}[H]
\centering\small
\caption{Semantic utility $U_{\mathrm{sem}}$ by operating point (reference-based judge; $L$ from T21).}
\label{tab:usem}
\begin{tabular}{@{}lcc@{}}
\toprule
Operating point & $U_{\mathrm{sem}}$ & $L$ (strong attackers) \\
\midrule
Direct forwarding & 1.000 & 0.768--0.789 \\
\textbf{Principled generalization} & \textbf{0.431} & \textbf{0.000} \\
Placeholder substitution & 0.228 & 0.000 \\
Selective LDP & 0.188 & 0.049--0.057 \\
\bottomrule
\end{tabular}
\end{table}

\noindent The semantic metric places the principled-generalization point at $(0.000, 0.431)$---roughly \textbf{double} the placeholder corner and 2.3$\times$ the LDP corner at equal-or-better leakage---confirming T21's caveat in the predicted direction, though on the fully coupled corpus it remains below the $\tau_U^{\mathrm{sem}}{=}0.5$ release threshold. Two design components contribute to this point and are reported for completeness: (i) a completed generalization lexicon (numeric band rules for amounts, incomes, dates, durations), which sets exact-value utility at the safe point at \textbf{0.276} (by domain: banking 0.370, legal 0.317, travel 0.264, medical 0.152; 0.033 with a deletion-only fallback) at unchanged $L=0.000$; (ii) a tolerant grader that matches any number in the answer within tolerance of the truth. A human utility study anchors the metric (human study~1; preregistered criterion: row-level Spearman $\rho\ge0.8$ between human means and automatic $U_{\mathrm{sem}}$). Two annotators independently scored all 160 rows (40 instances $\times$ 4 blinded operating points; blind key held back until scoring finished). Inter-rater agreement is Krippendorff's $\alpha=0.780$ (ordinal); the human-vs-automatic row-level correlation is $\rho=\mathbf{0.839}$ ($n{=}160$), meeting the preregistered bar, and the four operating-point means rank identically under both measurements (point-level $\rho=1.000$): direct 0.988 (auto 1.000), principled generalization \textbf{0.487} (auto 0.406), placeholder 0.144 (auto 0.175), selective LDP 0.181 (auto 0.200). Two consequences: (i) the semantic judge is a valid proxy for human usefulness judgements; (ii) humans credit honest generalization even more than the semantic judge does---the safe point's human-rated utility (0.487) is $1.8\times$ the exact-value metric (0.276 with the completed lexicon) and approaches the $\tau_U^{\mathrm{sem}}{=}0.5$ release threshold even on this worst-case fully coupled corpus. Two context notes prevent misreading: the auto means above are on the 40-instance human-study subset (generalized 0.406 here vs 0.431 on the full test-160 of Table~\ref{tab:usem}), and the placeholder/LDP ordering flips between this subset (0.175$<$0.200) and the full set (0.228$>$0.188)---a small-$n$ composition effect, not a contradiction. The two annotators are the second and third authors (graduate students in the group), independent of the pipeline implementation and blind to variant identities and automatic scores.

\paragraph{T24 Mixed-coupling benchmark: where automatic release becomes non-degenerate.}
The test corpus is worst-case by construction (every quasi-identifier task-relevant). We generated a \emph{mixed-coupling} benchmark (120 fresh instances, seed 20260908, mean coupling ratio 0.40: only task-critical types annotated task-relevant---medical \{age, disease, drug, allergy\}, banking \{amount, charge\}, travel \{date, city\}, legal \{case\_type, amount, date\}) and measured the safe operating point (principled generalization) on it:

\begin{table}[H]
\centering\small
\caption{Mixed-coupling benchmark ($n{=}120$): safe-point metrics and automatic-release rate under the two utility metrics (release rule: $L \le \tau_L$ and $U \ge$ threshold; threshold 0.7 strict / 0.5 semantic). $U_{\mathrm{strict}}$ is computable only on instances containing task-critical exact values (44/120 have none; per-row scorable $n$: banking 30, legal 22, medical 18, travel 6, overall 76), so the all-domain 0.415 is a mean over 76 scorable instances and is \emph{not} the average of the domain rows. Release rates carry 95\% Wilson intervals.}
\label{tab:mixed}
\begin{tabular}{@{}lcccc@{}}
\toprule
 & $L$ & $U_{\mathrm{strict}}$ ($n$) & $U_{\mathrm{sem}}$ & auto-release (strict / sem) \\
\midrule
All domains & 0.000 & 0.415 (76) & 0.571 & 24.2\% / \textbf{67.5\%} \\
banking & 0.000 & 0.567 (30) & \textbf{0.933} & --- / 100\% (30/30; Wilson [88.6, 100]) \\
legal & 0.000 & 0.523 (22) & 0.633 & --- / 70\% (21/30; [52.1, 83.3]) \\
travel & 0.000 & 0.000 (6) & 0.350 & --- / 53\% (16/30; [36.1, 69.8]) \\
medical & 0.000 & 0.167 (18) & 0.367 & --- / 47\% (14/30; [30.2, 63.9]) \\
\bottomrule
\end{tabular}
\end{table}

\noindent\textbf{Threshold provenance and sensitivity.} The semantic release threshold $\tau_U^{\mathrm{sem}}{=}0.5$ was frozen with deviation log D14 (2026-09-08), \emph{before} the human-study scores were collected (2026-09-09)---the human-rated 0.487 therefore played no role in its choice. Sensitivity on this benchmark: the release rate is 67.5\% at $\tau_U^{\mathrm{sem}}{\in}\{0.4,0.5\}$ and 46.7\% at $0.6$ (the 1/0.5/0 rubric quantises instance means, leaving no mass between 0.4 and 0.5).

\textbf{Corner baselines under the identical release rule (P41, measured---not inferred).} Both corner baselines were run on the same 120 instances with the same gates ($L$: 75-probe flash attacks; $U_{\mathrm{strict}}$: QA probes; $U_{\mathrm{sem}}$: the reference-based judge):

\begin{table}[H]
\centering\small
\caption{Mixed-coupling benchmark ($n{=}120$): operating-point comparison under the identical release rule. The safe point Pareto-dominates both corners.}
\label{tab:mixedpoints}
\begin{tabular}{@{}lcccc@{}}
\toprule
Operating point & $L$ & $U_{\mathrm{strict}}$ & $U_{\mathrm{sem}}$ & auto-release (strict / sem) \\
\midrule
\textbf{Principled generalization (safe point)} & \textbf{0.000} & \textbf{0.415} & \textbf{0.571} & \textbf{24.2\% / 67.5\%} \\
Placeholder substitution & 0.000 & 0.000 & 0.421 & 0.0\% / 58.3\% [49.4, 66.8] \\
Selective LDP & 0.070 & 0.099 & 0.254 & 2.5\% / 25.0\% [18.1, 33.4] \\
\bottomrule
\end{tabular}
\end{table}

\noindent On partially coupled corpora the system delivers \textbf{non-degenerate automatic operation}: 67.5\% of instances release automatically at zero measured literal leakage with semantic utility above threshold; the rest route to human review with certificate packages. The per-instance split (not the corpus average) is the decision mechanism's distinctive output---and it is now \emph{measured} against the corners (Table~\ref{tab:mixedpoints}): placeholder ties the safe point's zero leakage but loses 9.2pp of release rate and all exact-value utility; selective LDP fails the leakage gate on average ($L{=}0.070 > \tau_L$) and releases only 25.0\%. The safe point Pareto-dominates both corners under the identical rule.

\paragraph{T25 Mechanism of the post-dehydration certification gap ($\hat\Delta_{\mathrm{post}}$), and a validated local repair.}
On the 70 locally PASS-certified pre-hard-line artifacts, external strong attackers measured $L \approx 0.43$--$0.49$ while the local adversarial probes measured ${\approx}0$: $\hat\Delta_{\mathrm{post}} \approx 0.44 \approx 2.4\times$ the raw-text $\hat\Delta = 0.181$ (T5)---calibration is regime-dependent, as \S\ref{sec:security} requires. Diagnosis (30 PASS artifacts, local ornith): asked in a \emph{comprehension} format (``list every concrete personal-attribute value in this text''), the local model enumerates the residual literals at rate \textbf{0.992} (29/30 instances complete; mean 3.3 residual values/instance)---the values are seen. The failure is the \emph{adversarial probe template}, not comprehension. Repair validated (T25b, $n{=}30$ PASS artifacts): scoring the same local model in enumeration format yields $L_{\mathrm{enum}}^{\mathrm{local}} = \mathbf{1.000}$---every residual literal on every instance is enumerated locally, versus ${\approx}0$ under the adversarial template and 0.43--0.49 for external strong attackers under it. Consequence for the architecture: the \emph{literal} dimension of $\hat\Delta_{\mathrm{post}}$ is a probe-format artifact that a local enumeration probe closes for closed-vocabulary value types---no external call needed for that dimension, which also defuses most of the re-test trust dilemma above; the in-vivo rerun (T25c) adds one qualification for open-vocabulary categoricals. What genuinely remains for the external strong-attacker re-test is the \emph{semantic/combinatorial} dimension (inference that requires stronger reasoning than the local class possesses, T19); the mandatory chain is retained for that residual, with its trust semantics as fixed above.

\textbf{A general methodological lesson: probe format is a first-class experimental factor.} T3's monotonicity review and T25's mechanism diagnosis are the \emph{same phenomenon} observed twice: injecting domain priors ($A_1$) made attackers answer \emph{more conservatively} on raw text without adding usable knowledge, and the adversarial probe template made the local evaluator answer UNKNOWN to residual values it demonstrably sees (comprehension-format enumeration recall 0.992). LLM-evaluator measurements are therefore joint functions of \emph{knowledge} and \emph{answering policy/probe format}---never of knowledge alone. Any certificate built on LLM probes must treat probe format as a controlled factor: we froze formats in the protocol (D8/D9), re-tested flagged monotonicity violations under a mandated review rule, and closed the literal dimension of $\hat\Delta_{\mathrm{post}}$ with an enumeration-format probe (T25b). We recommend the same discipline to any work using LLM judges as measurement instruments.

\paragraph{T25c The enumeration probe in vivo: full-pipeline rerun (test-160).}
Replacing Branch-1's adversarial template with the enumeration probe and re-running the entire loop unchanged (same machinery, gates, and question bundles; D17) makes the local certificate markedly stricter at essentially unchanged cost: the PASS rate falls from 43.8\% (70/160) to \textbf{5.0\%} (8/160: banking 2, legal 6, medical 0, travel 0) and mean rounds rise from 4.0 to 7.8 (242 s/instance vs 240 s; utility on accepted steps unchanged at 0.955). Most adversarial-template PASS certificates were therefore false passes---residuals the local model could see but would not report under the adversarial template; the repaired probe converts them into genuine strengthening (rounds $\times$2) or honest INFEASIBLE routing, which is the safe direction. External validation of the 8 enum-PASS artifacts (two strong attackers, 75 probes each) splits them sharply: the 2 banking artifacts are genuinely clean ($L=0.000$), while all 6 legal artifacts still leak $L=0.320$ through residual \emph{open-vocabulary categorical} literals (a law-firm name, a case type) that the local enumeration itself missed---local enumeration raises literal recall dramatically but not to 1.0 for open-vocabulary categoricals, precisely the boundary class of \S\ref{sec:boundary}. After the deterministic hard line, all eight measure $L=0.000$ externally. The in-vivo picture is thus three-layered and consistent with the layered model: (i) the enum-repaired soft certificate honestly surfaces most residuals (its 152/160 INFEASIBLE routing is the safe direction); (ii) open-vocabulary categorical residuals remain beyond local enumeration---exactly where the deterministic hard line and the external re-test are load-bearing; (iii) delivered egress safety is unchanged (0.000). The same rerun on the train-400 stream (399 of 400 paired instances; per-instance coverage and the error log in the protocol, D18) reproduces the pattern and strengthens the headline: on the same instances, raw-text leakage is 0.762 under the enumeration probe vs 0.457 under the adversarial template, and post-loop leakage 0.358 vs 0.304---so the paired reduction is \emph{larger} under the stricter probe ($-0.404$ vs $-0.154$), the PASS rate falls from 37.6\% to 8.5\%, and mean rounds rise from 3.08 to 4.06. The leakage-reduction claim is therefore robust to probe format; the adversarial-template figures understate both the baseline leakage and the delivered reduction.

\subsection{Applicability Boundary: Which Domains and Information Types Automate}\label{sec:boundary}

Synthesizing T21/T23/T24 (safe-point utility by domain and attribute type, with $L=0.000$ everywhere at the safe point):

\begin{table}[H]
\centering\small
\caption{Applicability boundary by information nature (evidence: T23--T24).}
\begin{tabular}{@{}p{4.2cm}p{5.4cm}p{5.4cm}@{}}
\toprule
Information nature & Sanitization behavior & Automation verdict \\
\midrule
Explicit identifiers (names, IDs, contacts, accounts) & Removed outright; rarely task-critical & \textbf{Fully automatic} (hard line; deterministic) \\
Numeric/ordinal quasi-identifiers (age, amounts, incomes, dates, durations) & Band generalization preserves semantic utility ($U_{\mathrm{sem}}$ 0.93 in banking) & \textbf{Largely automatic} (release 70--100\% in numeric-led domains) \\
Open-vocabulary categoricals that \emph{are} the task object (disease for medical advice, destination for travel, case object for legal) & Generalization destroys the advice target ($U_{\mathrm{sem}}$ 0.35--0.37) & \textbf{Certify-and-route}: automatic certification + residual report; human decides value/risk \\
Strongly coupled instances in any domain ($I(X;T\mid S)\approx 0$) & Feasible region empty (Prop.~\ref{prop:feasibility}) & \textbf{Human only}; system's role is the early-warning package \\
\bottomrule
\end{tabular}
\end{table}

\noindent The near-zero exact-value utility at the safe point is thus a property of \emph{(i) the worst-case fully-coupled benchmark, (ii) the exact-value metric, and (iii) the generalization lexicon's coverage}---all three quantified above---not an inherent property of the method. The method's operating envelope: numeric-led, partially coupled corpora approach full automation; categorical-coupled domains (medical chief-complaint attributes, travel destinations) use DR-SL as a \emph{certification and annotation layer} ahead of human review, which is precisely the layered-assurance design point of \S\ref{sec:layered}.

\section{Conclusion and Future Work}

This paper proposed DR-SL, a self-learning sensitive-information isolation method centered on reversibility verification with prompts as the optimization object. We placed de-identification sufficiency under Pufferfish semantics, proved the lower-boundedness of the leakage score (Fano-type) and the witness relation from hit rates to privacy parameters, gave a rate-privacy feasibility criterion and a termination theorem for the lexicographic iteration, positioned the method as ``approaching theoretical limits via an empirically calibrated heuristic certificate'', and verified all theoretical bounds with numerical property tests on 3000 synthetic mechanisms. Engineering-wise, the offline stage calibrates only on non-sensitive corpora while all online decisions are local, and the layered assurance model (soft certificate $+$ hard line $+$ human fallback) explicitly characterizes each layer's guarantee and residual risk. Full-scale experiments show that the loop significantly reduces leakage ($0.457\to0.304$, $p\approx0$, same measurement basis before and after). The release-chain audit (T18) establishes the end-state under the mandatory sequence \emph{local gate $\to$ deterministic hard line $\to$ external strong-attacker re-test}: literal leakage at egress is \textbf{0.000} on all 160 test instances (two independent strong attackers; per-instance Wilson upper bound $\approx0.048$, pooled 95\% upper bound $<0.001$), while exact-value task utility at that safe point is 0.276 with the completed generalization lexicon (0.033 with deletion-only fallback) on this deliberately worst-case, fully task-coupled benchmark, and reference-based semantic utility reaches 0.431; utility figures measured on pre-hard-line artifacts are reported separately because residual literals could carry part of them. On such benchmarks the system's delivered value is therefore precisely characterized: a \emph{certification machine} (quantified leakage lower bounds, residual reports, auditable decisions) plus a \emph{human-fallback router}, together with a mapped leakage-utility frontier (T21)---not an automatic utility-preserving sanitizer. Where non-degenerate automatic operating points exist (partially coupled corpora; utility metrics that credit honest generalization) is quantified in T23--T25 and summarized as the applicability boundary in \S6.10; the medical domain's low pass rate corroborates the feasibility criterion's prediction for strongly task-coupled sensitive information.

Future work: (i) two rounds of self-learning testing (H1 unproven; corrected H1$'$ directionally consistent but not significant, T17)---re-test H1$'$ on harder extraction benchmarks (covert quasi-identifiers, adversarial formats) and larger validation sets ($n\ge700$ for power), and explore multi-round iterative distillation; (ii) re-measure $A_3$'s $k$ risk on dehydrated text; (iii) replace the W2 example marginals with real public statistical tables; (iv) scaling up the human studies (two raters each; utility $\rho=0.839$ passed its preregistered bar, annotation type-level recall ${\ge}0.987$) to more raters, real-user populations, and harder covert-identifier benchmarks; (v) an open-ended judge-rubric utility metric that credits generalization rather than only exact values (T21 caveat); (vi) computable Maximal Leakage upper-bound estimation; (vii) interactive-latency engineering of the $\ge$75-probe gate; (viii) composition with structured-field isolation.

\paragraph{Reproducibility and artifact availability.} All artifacts are public: the simulation pipeline and test suite, the four synthetic datasets (main v2, expansion, stage-2 stream, mixed-coupling; all fully fictional), the pre-registered protocol with the complete deviation log (D1--D17), the human-study package with scoring sheets and analysis scripts, and the proof-verification suite, at \url{https://gitee.com/jiangsu-yunhefeng-intelligent_0/dr-sl} (code MIT; datasets CC-BY-4.0).

\appendix

\section*{Appendix A: Complete Proofs}
\addcontentsline{toc}{section}{Appendix A: Complete Proofs}

\noindent\emph{Notation}: $\log$ is base 2; $H_2(p) = -p\log p - (1-p)\log(1-p)$; $\ln$ is the natural logarithm. Every proof step cites its justification.

\subsection*{A.1 Proposition~\ref{prop:recall} (Recall decomposition, over-extraction-corrected)}

\textbf{Statement.} Let $G=S_{\mathrm{gold}}$, $F=S_{\mathrm{found}}$, $\mathrm{Recall} = |F\cap G|/|G|$. Branch~1 probes the attributes in $F$, measuring hit rate $L_{\mathrm{measured}} = \frac{1}{|F|}\sum_{s\in F}\mathrm{hit}(s)$, where $\mathrm{hit}(s)\in[0,1]$ is $1$ if the probe on $s$ is answered correctly (for $s\in F\setminus G$, spurious probes contribute hits or misses arbitrarily). Then
\[
L_{\mathrm{system}} \;\le\; \frac{|F|}{|G|}\,L_{\mathrm{measured}} + (1 - \mathrm{Recall}),
\qquad\text{where } L_{\mathrm{system}} = \frac{1}{|G|}\sum_{s\in G}\mathrm{leak}(s).
\]

\textbf{Proof.} Split the gold sum by whether the attribute was found:
\[
L_{\mathrm{system}} = \frac{1}{|G|}\Bigl[\sum_{s\in G\cap F}\mathrm{leak}(s) + \sum_{s\in G\setminus F}\mathrm{leak}(s)\Bigr].
\]
Second term: bounding $\mathrm{leak}(s) \le 1$, $\frac{1}{|G|}\sum_{s\in G\setminus F}\mathrm{leak}(s) \le \frac{|G|-|G\cap F|}{|G|} = 1-\mathrm{Recall}$.
First term: for $s \in G\cap F$, $\mathrm{leak}(s)$ equals the probe hit $\mathrm{hit}(s) \in [0,1]$. Since all summands are non-negative,
\[
\sum_{s\in F\cap G}\mathrm{leak}(s) \;\le\; \sum_{s\in F}\mathrm{hit}(s) \;=\; |F|\cdot L_{\mathrm{measured}} .
\]
Dividing by $|G|$ and adding the two terms:
\[
L_{\mathrm{system}} \le \frac{|F|}{|G|}\,L_{\mathrm{measured}} + (1-\mathrm{Recall}).\ \blacksquare
\]

\noindent\emph{Remarks.} (i) When $|F|\le|G|$ (no over-extraction) the older form $L_{\mathrm{system}} \le L_{\mathrm{measured}} + (1-\mathrm{Recall})$ is recovered only if additionally $F\subseteq G$; in general the factor $|F|/|G|$ is unavoidable. (ii) On Stage-1 data the mean inflation is $\iota = 1.43$ (median 1.40), so the T9 conditional conclusion reads: residual $\le 1.43\,L_{\mathrm{measured}} + 0.049$. (iii) Over-extraction is conservative for \emph{probing} (more attributes watched) but anti-conservative for the \emph{measured rate} (dilution); both effects are now explicit.

\subsection*{A.2 Theorem~\ref{thm:fano} (Lower-boundedness, Fano-type)}

\textbf{Statement.} Let attribute $s_i$ have domain size $|V_i|$ and let any attacker achieve per-attribute hit rate $L_i \ge 1/|V_i|$ (no worse than blind guessing; always assumable for an optimal attacker, which can ignore $D$ and guess the prior mode). Then
\[
I(D;S) \ge H(S) - \sum_i \bigl[ H_2(1-L_i) + (1-L_i)\log(|V_i|-1) \bigr].
\]

\textbf{Proof.}
\begin{enumerate}
\item Let the optimal attacker's error rate for attribute $i$ be $P^{*}_{e,i} = \inf_{\hat g} P(\hat g_i(D) \ne S_i)$. The measured attacker achieves hit rate $L_i$, so its error rate $1-L_i$ is feasible; hence $P^{*}_{e,i} \le 1-L_i$ (definition of optimality).
\item By ``no worse than blind guessing'', $P^{*}_{e,i} \le 1 - 1/|V_i| = (|V_i|-1)/|V_i|$.
\item \textbf{Fano's inequality}~\cite{cover2006elements}: $H(S_i\mid D) \le H_2(P^{*}_{e,i}) + P^{*}_{e,i}\log(|V_i|-1)$.
\item \textbf{Monotonicity lemma}: $g(p) := H_2(p) + p\log(k-1)$ is non-decreasing on $p \in [0, (k-1)/k]$. Verification: $g'(p) = \log\frac{1-p}{p} + \log(k-1) = \log\frac{(k-1)(1-p)}{p} \ge 0 \iff p \le \frac{k-1}{k}$. (differentiation and elementary inequality)
\item By (1)(2)(4): $H(S_i\mid D) \le g(P^{*}_{e,i}) \le g(1-L_i) = H_2(1-L_i) + (1-L_i)\log(|V_i|-1)$.
\item \textbf{Subadditivity of conditional entropy}: $H(S\mid D) \le \sum_i H(S_i\mid D)$.
\item Combining: $I(D;S) = H(S) - H(S\mid D) \ge H(S) - \sum_i H(S_i\mid D) \ge H(S) - \sum_i g(1-L_i)$.\ $\blacksquare$
\end{enumerate}

\textbf{Corollary (threshold mapping).} If all per-attribute hit rates are $\le \tau_L$, then $I(D;S) \ge H(S) - \sum_i g(1-\tau_L)$; normalized: $I_{\mathrm{norm}}(D;S) \ge 1 - \sum_i g(1-\tau_L)/H(S)$. That is, $\tau_L$ corresponds to a leakage \emph{lower-bound} guarantee.

\subsection*{A.3 Proposition~\ref{prop:witness} (Pufferfish witness)}

\textbf{Setting.} A single attribute with domain $V$, prior $p(v) > 0$ ($\forall v$). The mechanism is a channel $P(d\mid v)$. Let the smallest Pufferfish parameter the mechanism actually satisfies be
\[
\varepsilon_{\mathrm{true}} := \max_{v_1,v_2\in V,\ d} \Bigl|\ln \frac{P(d\mid v_1)}{P(d\mid v_2)}\Bigr|
\]
(convention: $0/0$ pairs excluded; if $P(d\mid v_1)>0=P(d\mid v_2)$, record $+\infty$). Let the optimal (MAP) attacker's hit rate be $L^{*}$, and any measured attacker's hit rate $L \le L^{*}$. Let $p_{\max} = \max_v p(v)$, $p_{\min} = \min_v p(v)$.

\textbf{Statement.} $\varepsilon_{\mathrm{true}} \ge \ln\frac{L}{1-L} + \ln\frac{p_{\min}}{p_{\max}}$; with a uniform prior, $\varepsilon_{\mathrm{true}} \ge \ln\frac{L}{1-L}$.

\textbf{Proof.}
\begin{enumerate}
\item The optimal attacker guesses the posterior mode for output $d$: $v^{*}(d) = \argmax_v p(v\mid d)$; its hit rate is $L^{*} = \mathbb{E}_d[\max_v p(v\mid d)]$ (MAP optimality, Bayesian decision theory).
\item Since $\max_v p(v\mid d) \le 1$, if $\max_v p(v\mid d) < L^{*}$ held almost everywhere, its expectation would be $< L^{*}$, contradicting (1). Hence \emph{there exists an output $d_0$ with $m := \max_v p(v\mid d_0) \ge L^{*}$}.
\item At $d_0$ take $v^{*} = \argmax_v p(v\mid d_0)$; then $p(v^{*}\mid d_0) \ge L^{*}$; for any $v' \ne v^{*}$, $p(v'\mid d_0) \le 1 - p(v^{*}\mid d_0) \le 1 - L^{*}$. Thus the posterior odds
\[
\mathrm{odds}_{\mathrm{post}} := \frac{p(v^{*}\mid d_0)}{p(v'\mid d_0)} \ge \frac{L^{*}}{1-L^{*}}.
\]
\item \textbf{Bayes expansion}: $p(v\mid d_0) = P(d_0\mid v)p(v)/P(d_0)$, hence
$\mathrm{odds}_{\mathrm{post}} = \frac{P(d_0\mid v^{*})}{P(d_0\mid v')} \cdot \frac{p(v^{*})}{p(v')}$.
\item From (3)(4): $\frac{P(d_0\mid v^{*})}{P(d_0\mid v')} = \mathrm{odds}_{\mathrm{post}} \cdot \frac{p(v')}{p(v^{*})} \ge \frac{L^{*}}{1-L^{*}} \cdot \frac{p_{\min}}{p_{\max}}$.
(If $P(d_0\mid v')=0$ while $P(d_0\mid v^{*})>0$, the ratio is $+\infty$ and the conclusion holds trivially.)
\item By definition of $\varepsilon_{\mathrm{true}}$: $\varepsilon_{\mathrm{true}} \ge \ln\frac{P(d_0\mid v^{*})}{P(d_0\mid v')} \ge \ln\frac{L^{*}}{1-L^{*}} + \ln\frac{p_{\min}}{p_{\max}}$.
\item $x \mapsto \ln\frac{x}{1-x}$ is strictly increasing on $(0,1)$ and $L^{*} \ge L$, hence $\varepsilon_{\mathrm{true}} \ge \ln\frac{L}{1-L} + \ln\frac{p_{\min}}{p_{\max}}$.\ $\blacksquare$
\end{enumerate}

\textbf{Interpretation.} Attacks yield a lower bound on the mechanism's required $\varepsilon$ (a leakage witness). Uniform prior: $L{=}0.9 \Rightarrow \varepsilon_{\mathrm{true}} \ge 2.20$; $L{=}0.5 \Rightarrow \varepsilon_{\mathrm{true}} \ge 0$ (vacuous); $L{=}0.05 \Rightarrow$ negative (vacuous)---the certificate is non-trivial only when the evaluator is strong.

\subsection*{A.4 Proposition~\ref{prop:feasibility} (Feasibility criterion)}

\textbf{Setting.} The task variable $T$ is a function of $X$; the artifact $D$ is generated from $(X,S)$; attacker knowledge $A$ is fixed as conditioning.

\textbf{Statement.} For any dehydration mechanism, $I(D;T\mid A) \le I(D;S\mid A) + I(X;T\mid S,A)$; hence
\[
U^{*}(\varepsilon) := \max_{p(D|X):\, I(D;S|A) \le \varepsilon} I(D;T\mid A) \;\le\; \varepsilon + I(X;T\mid S,A).
\]

\textbf{Proof.}
\begin{enumerate}
\item \textbf{Monotonicity of mutual information}: $I(D;T\mid A) \le I(D;(T,S)\mid A)$.
\item \textbf{Chain rule}: $I(D;(T,S)\mid A) = I(D;S\mid A) + I(D;T\mid S,A)$.
\item \textbf{Conditional data-processing inequality}: given $(X,S)$, $D$ provides no additional information about $T$ ($T$ is a function of $X$), i.e., $T \to X \to D$ is a Markov chain conditioned on $(S,A)$; hence $I(D;T\mid S,A) \le I(X;T\mid S,A)$.
\item Combining: $I(D;T\mid A) \le I(D;S\mid A) + I(X;T\mid S,A) \le \varepsilon + I(X;T\mid S,A)$ on the feasible set; taking the supremum gives the claim.\ $\blacksquare$
\end{enumerate}

\textbf{Corollary (INFEASIBLE criterion).} If $\tau_U > \tau_L + I(X;T\mid S,A)$, the feasible region is empty and Algorithm~2$'$ necessarily outputs INFEASIBLE---infeasibility is a property of the problem, not a defect of the method.

\subsection*{A.5 Theorem~\ref{thm:termination} (Termination)}

\textbf{Proof.}
\begin{enumerate}
\item \textbf{Phase 1}: each round either strictly increases $U$ (loosen accepted) or fails to improve; two consecutive no-improvement rounds exit (no-progress exit), and a \texttt{None} return from loosen also exits. $U$ is bounded above by 1, so strictly increasing rounds are bounded; the no-progress exit consumes at most 2 rounds; Phase~1 ends finitely. If it ends with $U < \tau_U$, no strengthening can satisfy the gate (strengthening cannot raise $U$), and the algorithm quickly outputs INFEASIBLE via the Phase-2 skip mechanism.
\item \textbf{Phase 2}: each round does exactly one of the following, each incrementing $t$: accept a strengthening; accept a loosening; permanently add one attribute to \texttt{skipped}; record a failed strengthening.
\item \textbf{Termination}: $t$ has the hard upper bound $B = 3|S|+5$ and the loop condition includes $t < B$.
\item \textbf{Gate invariant}: at Phase-1 exit either $U \ge \tau_U$ or the exit is by no-progress; strengthenings are accepted only if $U' \ge \tau_U$; accepted loosenings can only increase $U$. Hence $U \ge \tau_U$ at all subsequent times.
\item Upon termination, PASS or INFEASIBLE (with residual report) is output per the criteria.\ $\blacksquare$
\end{enumerate}

\noindent\emph{Note}: evaluation noise (LLM sampling) does not affect termination (the budget and decisions are finitely many deterministic checks). It does affect \emph{statistical reliability}: the gate invariant $U \ge \tau_U$ holds for the \emph{sampled} utility estimates, not for the noise-free $U$; with per-instance sampling variance $\sigma^2$ the true $U$ can fall below $\tau_U$ by $O(\sigma)$ on accepted steps. The stability window ($N$ consecutive passes) and Wilson intervals bound this effect for the release decision but do not eliminate it---we report it as a limitation of the certified guarantee, not of termination.

\subsection*{A.6 Proposition~\ref{prop:direction} (Directional correctness)}

\textbf{Proof.} (i) If $I(D;S)=0$, i.e., $D \perp S$, then $p(v\mid d) = p(v)$ for all $v$ (definition of independence). No answer function exceeds the prior guessing rate (Bayesian decision theory), so the hit rate does not exceed the prior level---no information gain. (ii) If $H(X\mid D,S) > \eta \ge 0$, there exist $(d,s)$ with positive probability on which $X$ is not uniquely determined; any rehydrator's expected reconstruction error is positive, so the error-complement restoration score satisfies $\mathbb{E}[R] < 1$.\ $\blacksquare$

\section*{Appendix B: Proof-Verification Methodology}
\addcontentsline{toc}{section}{Appendix B: Proof-Verification Methodology}

The correctness of theoretical results cannot rest on manual review alone. For every bound in Appendix~A we establish a three-layer executable verification (implemented in \texttt{sim/proof\_checks.py}, delivered test-first: propositions are implemented as tested properties):

\begin{enumerate}
\item \textbf{Numerical property tests (counterexample search; core).} If a proposition is true, no concrete instance may falsify it. Method: construct synthetic mechanism families with fully known parameters---binary symmetric channels, naive-Bayes text models, arbitrary discrete joint distributions $P(D,S)$---whose true quantities (true $\varepsilon$ = maximum log-likelihood ratio, true mutual information, true posteriors) are exactly enumerable. Simulate attackers to obtain hit rates and assert the proposition's bound has the correct direction (bound $\le$ true quantity) and is non-trivial. Each proposition: $\ge$1000 random configurations, all passing.
\item \textbf{Boundary-case checks.} Zero information gain ($L$ = prior maximum) $\Rightarrow$ the bound degenerates to vacuous; $L\to1 \Rightarrow$ the bound diverges; uniform-prior special cases checked line-by-line against manual derivations.
\item \textbf{Cross-consistency.} On the same synthetic families, Theorem~\ref{thm:fano}'s Fano bound and Proposition~\ref{prop:witness}'s Pufferfish witness must point in the same direction without contradiction; aligned with known results (differential privacy's odds semantics; numerical instances of the data-processing inequality).
\end{enumerate}

If any test finds a counterexample, the proposition is corrected before submission and all tests regressed. Verification results are archived as a test report accompanying the paper (pass rates, configuration distributions, boundary-case lists).

\section*{Appendix C: Semantic-Utility Judge Rubric (Complete Prompt)}
\addcontentsline{toc}{section}{Appendix C: Semantic-Utility Judge Rubric}

The reference-based semantic utility metric ($U_{\mathrm{sem}}$, T23/T24) uses the following frozen system prompt; the judge sees the original material $X$ followed by the processed artifact $D$, and the user turn concatenates ``original material:'' $+$ $X$ $+$ ``processed material:'' $+$ $D$ with newline separators. The verbatim Chinese original (system prompt and user-turn template) is reproduced below when compiled with XeTeX and is distributed with the source package as \texttt{appendix\_c\_zh.tex} (UTF-8) otherwise; the English rendering that follows is line-by-line equivalent.

\ifxetex
\begin{quote}\small
系统提示词（冻结版，逐字）：\\
你是咨询材料质量评审员。你会看到同一咨询的【原始材料】和【处理后材料】。请判断：仅凭处理后材料，AI 助手能否给出与原始材料几乎同样有用、有针对性且正确的答复？\\
评分标准：\\
1.0 = 请求所需关键信息全部保留（允许以区间/档位/类别等泛化形式存在，且泛化后仍足以支撑同样的建议）；\\
0.5 = 关键信息部分缺失或过度模糊，建议有用性明显受限但仍有价值；\\
0.0 = 关键信息缺失、被占位符替代、或被篡改为与原值不符的内容（错误值会导致错误建议，按缺失处理）。\\
只输出 JSON：\{"score": 1.0 或 0.5 或 0.0, "reason": "一句话理由"\}。不要输出其他内容。

\medskip
用户轮模板（逐字）：\\
原始材料：\textbackslash n\{X\}\textbackslash n\textbackslash n处理后材料：\textbackslash n\{D\}
\end{quote}

\else
\noindent\textit{(Verbatim Chinese original: see \texttt{appendix\_c\_zh.tex} in the source package.)}
\fi

\noindent English gloss: \emph{``You are a quality reviewer of consultation material. Given the [original] and [processed] material of the same consultation, decide whether the processed material alone still supports essentially the same useful, targeted, and correct advice. 1.0 = all request-critical information preserved (generalized forms such as bands/ranges/categories acceptable if they still support the same advice); 0.5 = critical information partly missing or over-vague, advice usefulness clearly limited but still valuable; 0.0 = critical information missing, placeholder-substituted, or altered to values inconsistent with the original (wrong values cause wrong advice; count as missing). Output JSON only.''} Human study~1 used the same rubric in its scoring instructions, keeping the human anchor and the automatic metric on the same construct. Parse robustness: two attempts with a strict-JSON suffix on retry; regex fallback extracting \texttt{"score": v}, $v\in\{0,0.5,1\}$; unparseable responses are logged as missing, never imputed.

\end{document}